\documentclass[11pt,onecolumn,twoside]{IEEEtran}%
\usepackage{amsfonts}%
\usepackage{amsmath}%
\setcounter{MaxMatrixCols}{30}%
\usepackage{amssymb}%
\usepackage{graphicx}
\usepackage{subfigure}
\usepackage{setspace}
\usepackage{multirow}
\providecommand{\U}[1]{\protect\rule{.1in}{.1in}}

\begin{document}
\doublespacing
\title{Guarantees of Augmented Trace Norm Models in Tensor Recovery}
\author{Ziqiang Shi,~Jiqing Han~\IEEEmembership{Member IEEE},~Tieran Zheng,~Shiwen Deng,~Ji Li\thanks{Manuscript received XX, XX, 2011; revised XX, XX, 2011. This research was partly supported by the National Natural Science Foundation of China under grant No. 91120303 and No. 61071181.}\thanks{Ziqiang Shi, Jiqing Han, Tieran Zheng, Shiwen Deng, and JiLi
are all with the School of Computer Science and Technology,
Harbin Institute of Technology, No. 92, West Da-Zhi Street, Harbin, Heilongjiang,
China (Tel: +86-451-86417981, Email: \{zqshi, jqhan, zhengtieran,liji\}@hit.edu.cn), dengswen@126.com.}}
\maketitle

\begin{abstract}
This paper studies the recovery guarantees of the models of minimizing $\|\mathcal{X}\|_*+\frac{1}{2\alpha}\|\mathcal{X}\|_F^2$ where $\mathcal{X}$ is a tensor and $\|\mathcal{X}\|_*$ and $\|\mathcal{X}\|_F$ are the trace and Frobenius norm of respectively. We show that they can efficiently recover low-rank tensors.
In particular, they enjoy exact guarantees similar to those known for minimizing
$\|\mathcal{X}\|_*$ under the conditions on the sensing operator such as its null-space property, restricted
isometry property, or spherical section property. To recover a low-rank tensor
$\mathcal{X}^0$, minimizing $\|\mathcal{X}\|_*+\frac{1}{2\alpha}\|\mathcal{X}\|_F^2$ returns the same solution as minimizing $\|\mathcal{X}\|_*$ almost whenever
$\alpha\geq10\mathop {\max}\limits_{i}\|X^0_{(i)}\|_2$.
\end{abstract}

\begin{keywords}
Tensor norms, augmented model, convex optimization, low-rank recovery, restricted isometry property.
\end{keywords}

\section{Introduction}

Low-rank tensor recover problem is the generalization of sparse vector recovery and low-rank matrix recover to tensor data~\cite{gandy2011tensor,tomioka2010estimation,liu2009tensor}. It has drawn lots of attention from researchers
in different fields in the past several years. They have wide applications in data-mining, computer vision, signal/image
processing, machine learning, etc.. The fundamental problem of low-rank
tensor recovery is to find a tensor of (nearly) lowest rank from an underdetermined $\mathfrak{F}(\mathcal{X})=b$, where $\mathfrak{F}$ is
a linear operator and $\mathcal{X}\in \mathbb{R}^{I_1\times I_2\times \cdots \times I_N}$ is a $N$-way tensor.

To recover a low-rank tensor $\mathcal{X}^0\in \mathbb{R}^{I_1\times I_2\times \cdots \times I_N}$ from linear measurements $b=\mathfrak{F}(\mathcal{X}^0)$, a powerful approach is the convex model~\cite{gandy2011tensor,liu2009tensor}
\begin{equation}\label{eq:ConvexModel}
\mathop {\min}\limits_{\mathcal{X}}\{\|\mathcal{X}\|_*:\mathfrak{F}(\mathcal{X})=b\},
\end{equation}
where
\begin{equation}\label{eq:TensorTraceNormDefinition}
\|\mathcal{X}\|_*:=\frac{1}{N} \sum\limits_{i=1}^N \|X_{(i)}\|_*
\end{equation}
and $X_{(i)}$ is the mode-$i$ unfolding of $\mathcal{X}$, $\|X_{(i)}\|_*$ is the trace norm of the matrix $X_{(i)}$, i.e. the sum of the singular values of $X_{(i)}$. For vector $b$ with noise or generated by an approximately low-rank tensor, a variant of~(\ref{eq:ConvexModel}) is~\cite{liu2009tensor}
\begin{equation}\label{eq:ConvexModelVariant}
\mathop {\min}\limits_{\mathcal{X}}\{\|\mathcal{X}\|_*:\|\mathfrak{F}(\mathcal{X})-b\|_2\leq\sigma\}.
\end{equation}
Despite empirical success, the recovery guarantees of tensor recovery algorithms has not
been fully elucidated. Recently, several authors~\cite{lai2012augmented,recht2007guaranteed,mo2011new} have got excellent results in the guarantees of sparse vector recovery and low-rank matrix recover. In this paper, we try to generalize these results to low-rank tensor recovery. To the best of our knowledge, this is the first paper that studies the guarantees
of low-rank tensor recovery algorithm.

This paper mainly studies the guarantees of minimization of the augmented objective $\|\mathcal{X}\|_*+\frac{1}{2\alpha}\|\mathcal{X}\|_F^2$. The augmented model for~(\ref{eq:ConvexModel}) and~(\ref{eq:ConvexModelVariant}) are
\begin{equation}\label{eq:AugModel}
\mathop {\min}\limits_{\mathcal{X}}\{\|\mathcal{X}\|_*+\frac{1}{2\alpha}\|\mathcal{X}\|_F^2:\mathfrak{F}(\mathcal{X})=b\}.
\end{equation}
and
\begin{equation}\label{eq:AugModelVariant}
\mathop {\min}\limits_{\mathcal{X}}\{\|\mathcal{X}\|_*+\frac{1}{2\alpha}\|\mathcal{X}\|_F^2:\|\mathfrak{F}(\mathcal{X})-b\|_2\leq\sigma\}.
\end{equation}
respectively. These are natural generalizations of the augmented model for vector and matrix data~\cite{lai2012augmented} to tensor case.


\newtheorem{thm}{Theorem}
\newtheorem{cor}[thm]{Corollary}
\newtheorem{lem}[thm]{Lemma}
\newtheorem{mydef}{Definition}
\newtheorem{myRem}{Remark}

\section{Notations}
\label{sec:Notations}

We adopt the nomenclature mainly used by Kolda and Bader on tensor decompositions and applications~\cite{kolda2009tensor}, and also a few symbols of~\cite{de2000multilinear,tomioka2011statistical}.

The \emph{order} $N$ of a tensor is the number of dimensions, also known as ways or modes. Matrices (tensor of order two) are denoted by upper case letters, e.g. $X$, and lower case letters for the elements, e.g. $x_{ij}$. Higher-order tensors (order three or higher) are denoted by Euler script letters, e.g. $\mathcal{X}$, and element $(i_1,i_2,\cdots,i_N)$ of a $N$-order tensor $\mathcal{X}$ is denoted by $x_{i_1i_2\cdots i_N}$. \emph{Fibers} are the higher-order analogue of matrix rows and columns. A fiber is defined by fixing every index but one. The mode-$n$ fibers are all vectors $x_{i_1\cdots i_{n-1}:i_{n+1}\cdots i_N}$ that obtained by fixing the values of $\{i_1, i_2,\cdots, i_N\} \setminus i_n$. The mode-$n$ \emph{unfolding}, also knows as \emph{matricization}, of a tensor $\mathcal{X}\in \mathbb{R}^{I_1\times I_2\times \cdots \times I_N}$ is denoted by $X_{(n)}$ and arranges the model-$n$ fibers to be the columns of the resulting matrix. The unfolding operator is denoted as $\text{unfold}(\cdot)$. The opposite operation is $\text{refold}(\cdot)$, denotes the refolding of the matrix into a tensor.  The tensor element $(i_1,i_2,\cdots,i_N)$ is mapped to the matrix element $(i_n,j)$, where
\begin{equation*}
j=1+\displaystyle{\sum\limits_{\begin{subarray}{|}k=1\\ k\neq n\end{subarray}}^N (i_k-1)J_k}\quad  \textrm{ with  }\quad   J_k=\prod\limits_{\begin{subarray}{|}m=1 \\ m\neq n\end{subarray} } ^{k-1} I_m
\end{equation*}
Therefore, $X_{(n)}\in \mathbb{R}^{I_n\times I_1\cdots I_{n-1}I_{n+1}\cdots I_N}$. The $n$-\emph{rank} of a $N$-dimensional tensor $\mathcal{X}$, denoted as $\text{rank}_n(\mathcal{X})$ is the column rank of $X_{(n)}$, i.e. the dimension of the vector space spanned by the mode-$n$ fibers. We say
a tensor $\mathcal{X}$ is rank $(r_1, ... ,r_N)$ when $r_k=\text{rank}_k(\mathcal{X})$ for $k=1, ... ,N$, and denoted as $\text{rank}(\mathcal{X})$. We introduce an ordering among tensors by the rank:$\text{rank}(\mathcal{X})\leq\text{rank}(\mathcal{Y})\Leftrightarrow (\text{rank}_1(\mathcal{X}), ... ,\text{rank}_N(\mathcal{X}))\preceq(\text{rank}_1(\mathcal{Y}), ... ,\text{rank}_N(\mathcal{Y})) \Leftrightarrow \text{rank}_i(\mathcal{X})\leq\text{rank}_i(\mathcal{Y}),i=1,...,N$.
The inner product of two same-size tensors $\mathcal{X},\mathcal{Y}\in \mathbb{R}^{I_1\times I_2\times \cdots \times I_N}$ is defined as $\text{vec}(\mathcal{X})^{\top}\cdot\text{vec}(\mathcal{Y})$, where
vec is a vectorization.
The corresponding norm is $\|\mathcal{X}\|_F=\sqrt{<\mathcal{X},\mathcal{X}>}$, which is often called the Frobenius norm.

The $n$-th mode product of a tensor $\mathcal{X}\in \mathbb{R}^{I_1\times I_2\times \cdots \times I_N}$ with a matrix $U\in\mathbb{R}^{J\times I_n}$ is denoted by $\mathcal{X}\times_nU$ and is of size $I_1\times ... \times I_{n-1}\times J \times  I_{n+1} \times  \cdots \times I_N$. Elementwise, we have
\begin{equation}\label{eq:ModeNProductElementwise}
(\mathcal{X}\times_nU)_{i_1...i_{n-1}ji_{n+1}...i_N}=\sum\limits_{i_n=1}^{I_n}x_{i_1i_2...i_N}U_{ji_n}.
\end{equation}
Every tensor can be written as the product~\cite{de2000multilinear}
 \begin{equation}\label{eq:HOSVD}
\mathcal{A}=\mathcal{S}\times_1U^{(1)}\times_2U^{(2)}...\times_NU^{(N)},
\end{equation}
in which:
\begin{itemize}
  \item $U^{(n)}$ is a unitary $I_n\times I_n$ matrix,
  \item $\mathcal{S}$ is a $I_1\times I_2\times \cdots \times I_N$-tensor of which the subtensors $\mathcal{S}_{i_n=\alpha}$,
obtained by fixing the $n$-th index to $\alpha$, have the properties of:

1) all-orthogonality: two subtensors $\mathcal{S}_{i_n=\alpha}$ and $\mathcal{S}_{i_n=\beta}$ are orthogonal for all possible values of $n$, $\alpha$ and $\beta$ subject to $\alpha\neq\beta$: $<\mathcal{S}_{i_n=\alpha},\mathcal{S}_{i_n=\beta}>=0$,

2) ordering: for all possible values of $n$, one has $\|\mathcal{S}_{i_n=1}\|_F\geq\|\mathcal{S}_{i_n=2}\|_F\geq...\geq\|\mathcal{S}_{i_n=I_n}\|_F\geq0$.

The Frobenius norms $\|\mathcal{S}_{i_n=i}\|_F$, symbolized by $\sigma_i^{(n)}$, are mode-$n$ singular values of $\mathcal{A}$, that means the singular values of $A_{(n)}$.
\end{itemize}
This is called the higher-order singular
value decomposition (HOSVD) of a tensor $\mathcal{A}$ in~\cite{de2000multilinear}. Some properties of this HOSVD which will be used in this paper are list below as lemmas:

\begin{lem}
\label{lem:lemmaHOSVD0}
(\cite{de2000multilinear} Property 6). Let the HOSVD of $\mathcal{A}$ be given as in~(\ref{eq:HOSVD}), and let $r_n$ be equal to the highest index for which $\|\mathcal{S}_{i_n=r_n}\|_F>0$; then one has
$\text{rank}(A_{(n)})=r_n$.
\end{lem}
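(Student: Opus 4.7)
The plan is to reduce the statement about $\text{rank}(A_{(n)})$ to a statement about the rank of the mode-$n$ unfolding $S_{(n)}$ of the core tensor, and then to read that rank off directly from the all-orthogonality and ordering properties of $\mathcal{S}$.

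First, I would unfold the HOSVD identity~(\ref{eq:HOSVD}) in mode $n$. The standard mode-$n$ unfolding of a Tucker-type product yields
\begin{equation*}
A_{(n)} = U^{(n)}\, S_{(n)}\, \bigl(U^{(N)}\otimes\cdots\otimes U^{(n+1)}\otimes U^{(n-1)}\otimes\cdots\otimes U^{(1)}\bigr)^{\top}.
\end{equation*}
Since each $U^{(k)}$ is unitary and the Kronecker product of unitary matrices is unitary, left and right multiplication by these factors preserves rank, so $\text{rank}(A_{(n)}) = \text{rank}(S_{(n)})$. This collapses the problem to determining the rank of $S_{(n)}$.

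Next, I would identify the rows of $S_{(n)}$ with the vectorizations of the horizontal slices $\mathcal{S}_{i_n=\alpha}$: by definition of mode-$n$ matricization, the $\alpha$-th row of $S_{(n)}$ is $\text{vec}(\mathcal{S}_{i_n=\alpha})^{\top}$. The all-orthogonality property gives $\langle \mathcal{S}_{i_n=\alpha},\mathcal{S}_{i_n=\beta}\rangle = 0$ for $\alpha \ne \beta$, so the rows of $S_{(n)}$ form a mutually orthogonal set. By the ordering property combined with the definition of $r_n$, exactly the first $r_n$ rows have positive norm and rows $r_n+1,\ldots,I_n$ vanish. A mutually orthogonal collection of nonzero vectors is linearly independent, hence $\text{rank}(S_{(n)}) = r_n$, and combining with the previous step yields $\text{rank}(A_{(n)})=r_n$.

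There is essentially no hard step here; the argument is bookkeeping, and its only delicate point is writing out the mode-$n$ unfolding of the Tucker product with the Kronecker factors in the correct order so that the rank-preservation argument is unambiguous. Once that identity is in hand, the all-orthogonality of $\mathcal{S}$ converts the rank computation into counting nonzero slice norms, which is precisely what the ordering property together with the definition of $r_n$ tells us.
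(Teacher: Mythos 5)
Your argument is correct: unfolding the HOSVD in mode $n$ and discarding the unitary factors reduces the claim to $\mathrm{rank}(S_{(n)})=r_n$, which follows because the rows of $S_{(n)}$ are the vectorized subtensors $\mathcal{S}_{i_n=\alpha}$, mutually orthogonal by all-orthogonality and nonzero exactly for $\alpha\leq r_n$ by the ordering property and the definition of $r_n$. The paper itself gives no proof---it imports the lemma as Property~6 of the cited HOSVD reference---and your reasoning is essentially the standard proof given there, so there is nothing to add.
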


\begin{lem}
\label{lem:lemmaHOSVD1}
(~\cite{de2000multilinear} Property 8). Let the HOSVD of $\mathcal{A}$ be given as in~(\ref{eq:HOSVD}); due to the unitarily
invariant of the Frobenius norm, one has
$\|\mathcal{A}\|_F^2=\sum\limits_{i=1}^{I_1}\left (\sigma_i^{(1)}\right )^2=...=\sum\limits_{i=1}^{I_N}\left (\sigma_i^{(N)}\right )^2=\|\mathcal{S}\|_F^2$.
\end{lem}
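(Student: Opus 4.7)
The plan is to reduce the tensor Frobenius-norm identity to an equivalent matrix identity via mode-$n$ unfolding. The starting point will be the elementary fact $(\mathcal{B}\times_n U)_{(n)} = U\, B_{(n)}$, which is immediate from the elementwise formula in~(\ref{eq:ModeNProductElementwise}). Iterating this observation over all modes, the HOSVD factorization $\mathcal{A}=\mathcal{S}\times_1 U^{(1)}\times_2 U^{(2)}\cdots\times_N U^{(N)}$ unfolds in any fixed mode $n$ into a matrix factorization of the form $A_{(n)} = U^{(n)}\, S_{(n)}\, \bigl(U^{(N)}\otimes\cdots\otimes U^{(n+1)}\otimes U^{(n-1)}\otimes\cdots\otimes U^{(1)}\bigr)^{\top}$.

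From this factorization, the outer equality $\|\mathcal{A}\|_F^2 = \|\mathcal{S}\|_F^2$ follows at once from unitary invariance. Indeed, the Frobenius norm of any tensor equals the Frobenius norm of any of its unfoldings, each $U^{(k)}$ is unitary by hypothesis, and a Kronecker product of unitary matrices is again unitary; hence left- and right-multiplication by these factors preserves $\|\cdot\|_F$, giving $\|\mathcal{A}\|_F^2 = \|A_{(n)}\|_F^2 = \|S_{(n)}\|_F^2 = \|\mathcal{S}\|_F^2$.

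To close the chain, I would interpret $\sum_{i=1}^{I_n}\bigl(\sigma_i^{(n)}\bigr)^2$ in terms of $\|\mathcal{S}\|_F^2$. By the definition $\sigma_i^{(n)} := \|\mathcal{S}_{i_n=i}\|_F$, and by partitioning the entries of $\mathcal{S}$ according to the value of the $n$-th index (a disjoint decomposition of the index set), I obtain $\|\mathcal{S}\|_F^2 = \sum_{i=1}^{I_n}\|\mathcal{S}_{i_n=i}\|_F^2 = \sum_{i=1}^{I_n}\bigl(\sigma_i^{(n)}\bigr)^2$, and this holds separately for every mode $n$. Chaining these $N$ identities with the one from the previous paragraph delivers the full statement of the lemma.

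The only real obstacle is notational: making the unfolding identity precise with the correct column ordering and verifying that a Kronecker product of orthogonal matrices is orthogonal. Note that the all-orthogonality of $\mathcal{S}$ is not actually needed for the present claim; it would only be invoked if one additionally wished to identify $\sigma_i^{(n)}$ with the $i$-th singular value of $A_{(n)}$, which is asserted as a side remark in the HOSVD definition but is not part of the equality being proved here.
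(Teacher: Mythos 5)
Your proof is correct. The paper itself offers no proof of this lemma---it is quoted verbatim as Property~8 of the cited HOSVD reference---and your argument (unfold in mode $n$ via $(\mathcal{B}\times_n U)_{(n)}=U B_{(n)}$, use unitarity of $U^{(n)}$ and of the Kronecker product of the remaining factors, then partition $\|\mathcal{S}\|_F^2$ over the slices $\mathcal{S}_{i_n=i}$) is precisely the standard derivation from that source. Your closing remark is also on point: all-orthogonality is irrelevant to the stated chain of equalities under the definition $\sigma_i^{(n)}:=\|\mathcal{S}_{i_n=i}\|_F$, and would only be needed to justify the paper's side claim that these numbers coincide with the singular values of $A_{(n)}$.
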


\section{Motivations and contributions}

To explain why model~(\ref{eq:AugModel}) is interesting, we conducted following tensor completion simulations
\begin{equation}\label{eq:AugModelTest}
\mathop {\min}\limits_{\mathcal{X}}\{\|\mathcal{X}\|_*+\frac{1}{2\alpha}\|\mathcal{X}\|_F^2:x_{i_1i_2...i_N}=m_{i_1i_2...i_N},{i_1i_2...i_N}\in \Omega\}.
\end{equation}
to compare it with model~(\ref{eq:ConvexModel}) based tensor completion
\begin{equation}\label{eq:ConvexModelTest}
\mathop {\min}\limits_{\mathcal{X}}\{\|\mathcal{X}\|_*:x_{i_1i_2...i_N}=m_{i_1i_2...i_N},{i_1i_2...i_N}\in \Omega\}.
\end{equation}
The facade image data of~\cite{liu2009tensor} was used here to be an example.
Models~(\ref{eq:AugModelTest}) and~(\ref{eq:ConvexModelTest}) were solved to high accuracy by the solver
LRTC~\cite{liu2009tensor}. For each model, we measured and recorded
\begin{equation}\label{eq:RelativeErr}
\text{recovery relative error}: \|\mathcal{X}^*-\mathcal{X}_0\|_F/\|\mathcal{X}_0\|_F.
\end{equation}
The relative errors are depicted as functions of the number of iterations in Figure 1(a).

\begin{figure}
\subfigure[Recovery relative errors]{
\includegraphics[width=0.25\textwidth]{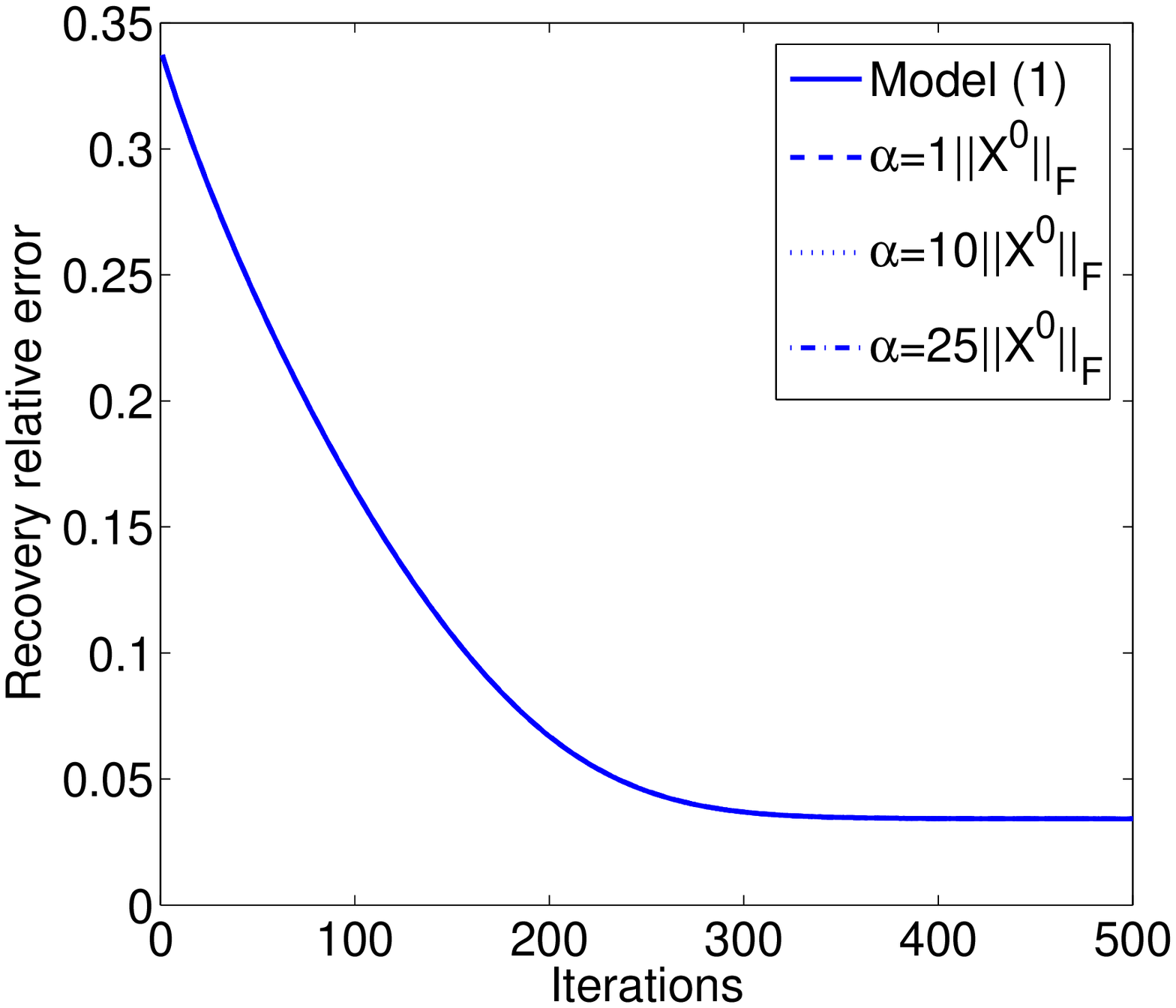}
}
\subfigure[Original image]{
\includegraphics[width=0.31\textwidth]{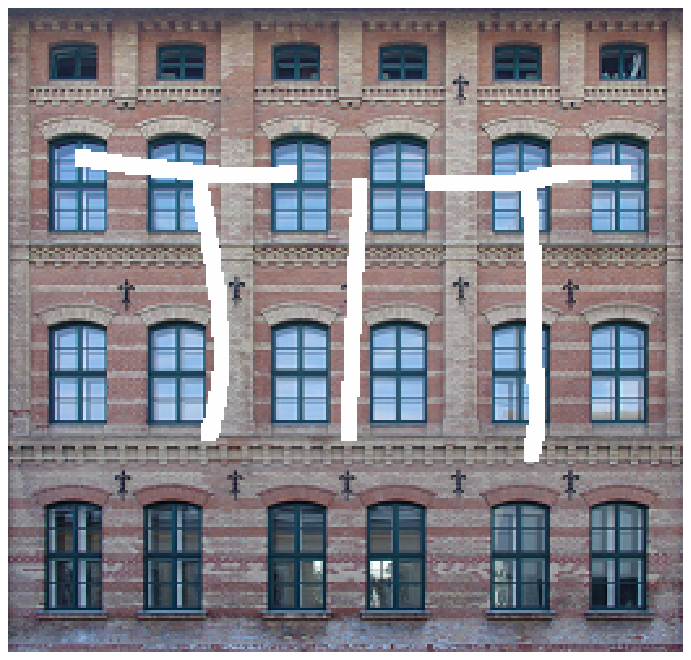}
}
\subfigure[Image recovered via~(\ref{eq:ConvexModel})]{
\includegraphics[width=0.31\textwidth]{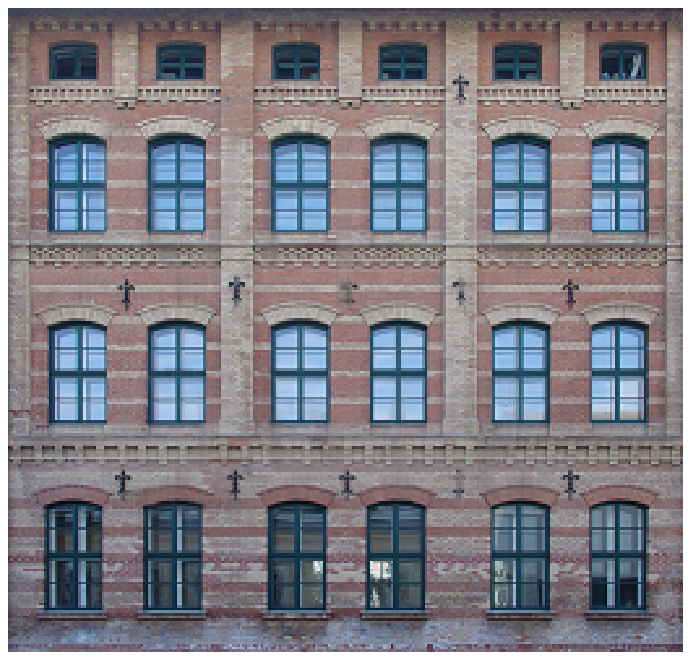}
}
\subfigure[Image recovered via~(\ref{eq:AugModel}) with $\alpha=1||\mathcal{X}^0||_F$]{
\includegraphics[width=0.31\textwidth]{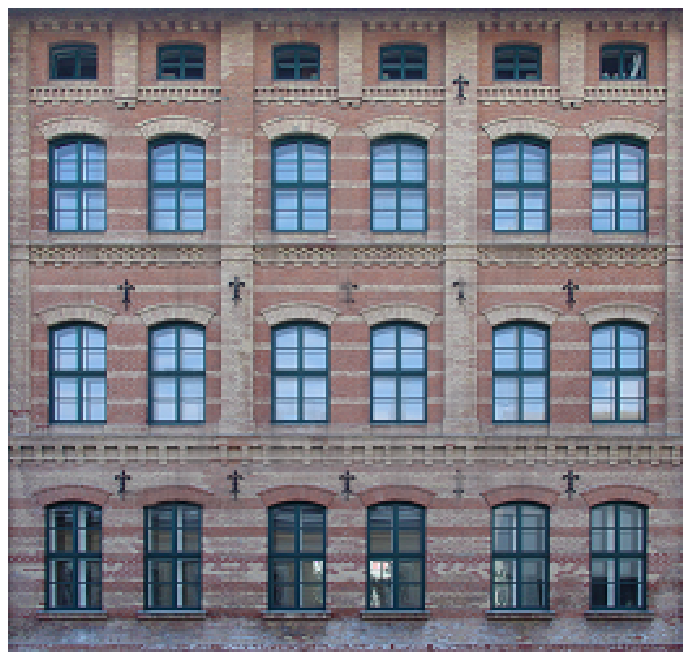}
}
\subfigure[Image recovered via~(\ref{eq:AugModel}) with $\alpha=10||\mathcal{X}^0||_F$]{
\includegraphics[width=0.31\textwidth]{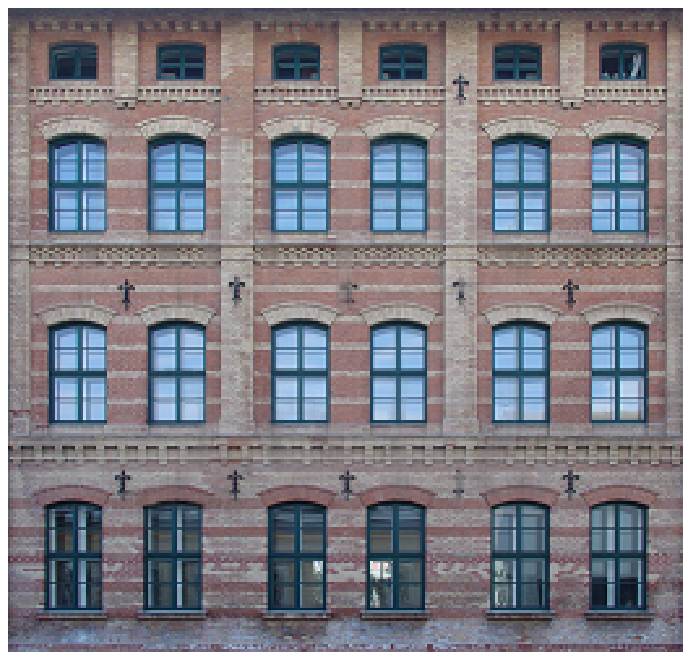}
}
\subfigure[Image recovered via~(\ref{eq:AugModel}) with $\alpha=25||\mathcal{X}^0||_F$]{
\includegraphics[width=0.31\textwidth]{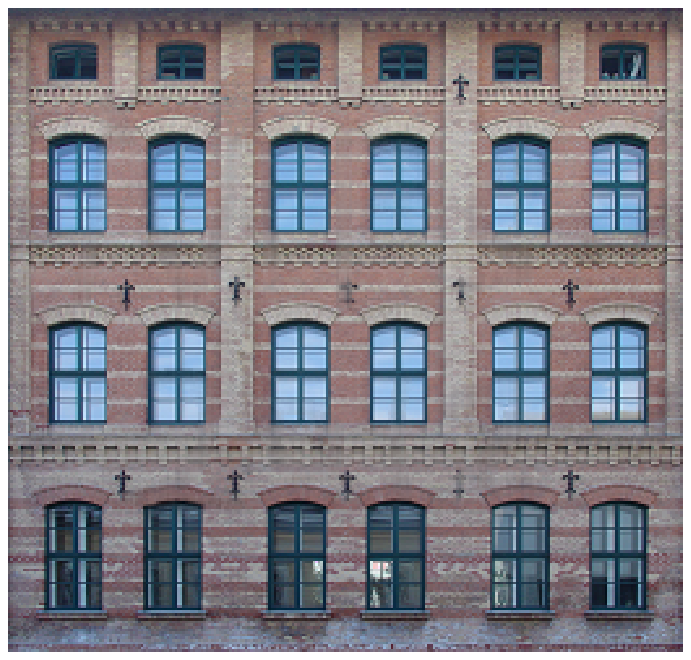}
}
\caption{Facade in-painting.}
\label{fig:PerformancComparOnline}       
\end{figure}

Motivated by the above example, we show in this paper that any $\alpha\geq10\mathop {\max}\limits_{i}\|X^0_{(i)}\|_2$ guarantees that problem~(\ref{eq:AugModel}) either recovers $\mathcal{X}^0$ exactly or returns an approximate of it nearly as good as the solution of problem~(\ref{eq:ConvexModel}). Specifically, we show that several properties of $\mathfrak{F}$, such as the null-space property (a simple condition
used in, e.g.,~\cite{liu2009tensor,donoho2001uncertainty,gribonval2003sparse,zhang2005simple,oymak2011simplified}), the restricted isometry principle~\cite{candes2005decoding}, and the spherical section property~\cite{zhang2008theory}, which have been used in the recovery guarantees for vectors and matrices, can also guarantee
the tensor recovery by model~(\ref{eq:ConvexModel}) and~(\ref{eq:AugModel}).

Even though $\mathcal{X}^0$ not known when $\alpha$ is set, $\mathop {\max}\limits_{i}\|X^0_{(i)}\|_2$ is often easy to estimate.
When $\mathop {\max}\limits_{i}\|X^0_{(i)}\|_2$ is not available, using inequalities $\|\mathcal{X}^0\|_F\geq\mathop {\max}\limits_{i}\|X^0_{(i)}\|_2$, one get
the more conservative formulae $\alpha\geq10\|\mathcal{X}^0\|_F$. Furthermore, when $\mathfrak{F}$
satisfies the RIP, one has $\|\mathcal{X}^0\|_F\leq\beta\|b\|_2$ for some $\beta>0$; hence, one has the option to
use the even more conservative formula $\alpha\geq10\beta\|b\|_2$.

\section{Tensor recovery guarantees}

This section establishes recovery guarantees for the original and augmented trace norm models~(\ref{eq:ConvexModel}) and~(\ref{eq:AugModel}). The results are given based on the properties of $\mathfrak{F}$ including the null-space property (NSP) in Theorem~\ref{thm:NSPcondition} and~\ref{thm:NSPconditionAug}, the restricted isometry principle (RIP)~\cite{candes2005decoding} in Theorem~\ref{thm:RIPconditionFor1} and~\ref{thm:RIPconditionFor2}, the spherical section property (SSP)~\cite{zhang2008theory} in Theorem~\ref{thm:SSPconditionFor1}. These results adapt and generalize of the work in~\cite{lai2012augmented}.

\subsection{Null space property}
The wide use of NSP for recovering sparse vector and low-rank matrices can be found in e.g.~\cite{donoho2001uncertainty,gribonval2003sparse,zhang2005simple,oymak2011simplified}. In this subsection, we extend the NSP conditions on $\mathfrak{F}$ for tensor recovery.
 Throughout this subsection, we let $\sigma_i(X), i=1,...,m$ denote the $i$-th largest singular value of matrix $X$ of rank $m$ or less, and $\Sigma(X)=\text{diag}(\sigma_1(X), ..., \sigma_m(X))$ denote the diagonal matrix of singular values and $s(X)=(\sigma_1(X), ..., \sigma_m(X))$. $\|X\|_2=\sigma_1(X)$ denotes the spectral norms of $X$.

 We will need the following two technical lemmas for the introduction of the tensor NSP conditions.
\begin{lem}
\label{lem:Theorem7.4.51}
([8] Theorem 7.4.51). Let $X$ and $Y$ be two matrices of the same size. Then we have
\begin{equation}\label{eq:1normMatrixInequality}
\sum\limits_{i=1}^{m}|\sigma_i(X)-\sigma_i(Y)|\leq \|X-Y\|_*
\end{equation}
and
\begin{equation}\label{eq:2normMatrixInequality}
\sum\limits_{i=1}^{m}(\sigma_i(X)-\sigma_i(Y))^2\leq \|X-Y\|_F^2.
\end{equation}
\end{lem}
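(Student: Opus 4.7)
The plan is to view both inequalities as instances of a single underlying fact: the map $X \mapsto s(X)$ is nonexpansive with respect to every unitarily invariant norm on matrices, so that specializing to the $\ell_1$ norm on the vector of singular values recovers~(\ref{eq:1normMatrixInequality}), while specializing to the $\ell_2$ norm recovers~(\ref{eq:2normMatrixInequality}). This is the content of Mirsky's theorem, cited here as Theorem~7.4.51 of Horn and Johnson; below I sketch how I would prove it from first principles rather than merely invoke it.

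For the Frobenius inequality~(\ref{eq:2normMatrixInequality}) the cleanest route goes through von~Neumann's trace inequality. First I would expand
\begin{equation*}
\|X-Y\|_F^{2}=\sum_i \sigma_i(X)^{2}+\sum_i \sigma_i(Y)^{2}-2\,\mathrm{Re}\,\mathrm{tr}(X^{*}Y),
\end{equation*}
and then bound $\mathrm{Re}\,\mathrm{tr}(X^{*}Y)\le \sum_i \sigma_i(X)\sigma_i(Y)$ by von~Neumann's inequality. Subtracting yields $\sum_i(\sigma_i(X)-\sigma_i(Y))^{2}\le \|X-Y\|_F^{2}$, which is exactly~(\ref{eq:2normMatrixInequality}). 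This step is entirely self-contained once von~Neumann is granted.

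For the nuclear inequality~(\ref{eq:1normMatrixInequality}) I would use the Jordan--Wielandt dilation trick. Define the Hermitian matrices
\begin{equation*}
\tilde X=\begin{pmatrix}0 & X\\ X^{*} & 0\end{pmatrix},\qquad \tilde Y=\begin{pmatrix}0 & Y\\ Y^{*} & 0\end{pmatrix},
\end{equation*}
whose nonzero eigenvalues are exactly $\pm\sigma_i(X)$ and $\pm\sigma_i(Y)$ respectively, and note that $\tilde X-\tilde Y$ is the analogous dilation of $X-Y$. I would then apply the Hermitian Lidskii--Wielandt inequality
\begin{equation*}
\sum_j \bigl|\lambda_j^{\downarrow}(\tilde X)-\lambda_j^{\downarrow}(\tilde Y)\bigr|\le \sum_j \bigl|\lambda_j(\tilde X-\tilde Y)\bigr|.
\end{equation*}
Since every eigenvalue on either side appears as a $\pm$ pair, the left side equals $2\sum_i|\sigma_i(X)-\sigma_i(Y)|$ and the right side equals $2\|X-Y\|_{*}$, giving~(\ref{eq:1normMatrixInequality}) after dividing by two.

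The main obstacle in this route is the Hermitian Lidskii--Wielandt step, whose standard proof rests on doubly stochastic majorization and Birkhoff's theorem on the extreme points of the Birkhoff polytope, and is substantially deeper than the Frobenius case. Because the lemma is quoted verbatim from Horn and Johnson, in the paper itself I would simply cite the reference rather than reproduce the dilation-plus-majorization machinery.
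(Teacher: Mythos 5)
Your proposal is correct, and it does strictly more than the paper, which offers no argument for this lemma at all: it is quoted verbatim as Theorem~7.4.51 of Horn and Johnson and used as a black box. Your two reductions are both sound. The Frobenius case via expanding $\|X-Y\|_F^2$ and invoking von Neumann's trace inequality is the standard Hoffman--Wielandt-type argument for singular values and is airtight. The nuclear-norm case via the Jordan--Wielandt dilation is also correct: the sorted spectrum of $\tilde X$ is $\sigma_1(X)\geq\cdots\geq\sigma_m(X)\geq 0=\cdots=0\geq-\sigma_m(X)\geq\cdots\geq-\sigma_1(X)$, so matching sorted eigenvalues of $\tilde X$ and $\tilde Y$ position by position doubles the left-hand side of (\ref{eq:1normMatrixInequality}), while the eigenvalues of $\tilde X-\tilde Y=\widetilde{X-Y}$ double the right-hand side; Lidskii's majorization $\lambda^{\downarrow}(\tilde X)-\lambda^{\downarrow}(\tilde Y)\prec\lambda(\tilde X-\tilde Y)$ combined with the convexity of $|\cdot|$ then gives the $\ell_1$ bound. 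Since you end by saying you would cite the reference in the paper itself, your approach coincides with the paper's in practice; the added value of your sketch is that it makes explicit which nontrivial ingredient (Lidskii--Wielandt, hence Birkhoff--von Neumann) is being smuggled in by the citation, and that the $\ell_2$ half of the lemma is genuinely more elementary than the $\ell_1$ half. No gaps.
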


\begin{lem}
\label{lem:1normAnd2normInequality}
([4] Equation (19)). Let $x$ and $h$ be two vectors of the same size, $\mathcal{S}:=\text{supp}(x)$ and $\mathcal{Z}=\mathcal{S}^c$. Then we have
\begin{equation}\label{eq:1normVectorInequality}
\|x+h\|_1\geq\|x\|_1+\|h_{\mathcal{Z}}\|_1-\|h_{\mathcal{S}}\|_1
\end{equation}
and
\begin{equation}\label{eq:2normVectorInequality}
\|x+h\|_1+\frac{1}{2\alpha}\|x+h\|_2^2\geq\left [\|x\|_1+\frac{1}{2\alpha}\|x\|_2^2\right ]+\left [\|h_{\mathcal{Z}}\|_1-(1+\frac{\|x_{\mathcal{S}}\|_\infty}{\alpha})\|h_{\mathcal{S}}\|_1\right ]+\frac{1}{2\alpha}\|h\|_2^2.
\end{equation}
\end{lem}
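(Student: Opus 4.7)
The plan is to exploit the support decomposition $\mathcal{S}=\text{supp}(x)$, $\mathcal{Z}=\mathcal{S}^c$ so that $x_{\mathcal{Z}}=0$ and hence $\|x\|_1=\|x_{\mathcal{S}}\|_1$, $\|x\|_2^2=\|x_{\mathcal{S}}\|_2^2$. Both inequalities then reduce to an algebraic manipulation followed by one elementary estimate per term.

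For inequality~(\ref{eq:1normVectorInequality}), I would first split the $\ell_1$ norm over the partition:
\begin{equation*}
\|x+h\|_1 = \|x_{\mathcal{S}}+h_{\mathcal{S}}\|_1 + \|x_{\mathcal{Z}}+h_{\mathcal{Z}}\|_1 = \|x_{\mathcal{S}}+h_{\mathcal{S}}\|_1+\|h_{\mathcal{Z}}\|_1,
\end{equation*}
and then apply the reverse triangle inequality $\|x_{\mathcal{S}}+h_{\mathcal{S}}\|_1\geq \|x_{\mathcal{S}}\|_1-\|h_{\mathcal{S}}\|_1=\|x\|_1-\|h_{\mathcal{S}}\|_1$. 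This is a one-liner and has no real obstacle.

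For inequality~(\ref{eq:2normVectorInequality}), the $\ell_1$ piece is handled exactly as above, so the remaining work is on the quadratic term. I would expand
\begin{equation*}
\tfrac{1}{2\alpha}\|x+h\|_2^2 = \tfrac{1}{2\alpha}\|x\|_2^2 + \tfrac{1}{\alpha}\langle x,h\rangle + \tfrac{1}{2\alpha}\|h\|_2^2,
\end{equation*}
and use $x_{\mathcal{Z}}=0$ to reduce the cross term to $\langle x_{\mathcal{S}},h_{\mathcal{S}}\rangle$. The only nontrivial step is lower-bounding this cross term, and I would do it with H\"older's inequality:
\begin{equation*}
\tfrac{1}{\alpha}\langle x_{\mathcal{S}},h_{\mathcal{S}}\rangle \geq -\tfrac{1}{\alpha}\|x_{\mathcal{S}}\|_\infty\|h_{\mathcal{S}}\|_1.
\end{equation*}

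Adding the bounds for the $\ell_1$ and quadratic pieces then collects the $\|h_{\mathcal{S}}\|_1$ contributions into the coefficient $(1+\|x_{\mathcal{S}}\|_\infty/\alpha)$ asserted in the lemma, and leaves the pure $\tfrac{1}{2\alpha}\|h\|_2^2$ term untouched. The main (and essentially only) conceptual step is recognising that the H\"older bound naturally produces precisely the $\|x_{\mathcal{S}}\|_\infty/\alpha$ inflation factor; once that is seen, no further estimate is needed and no sign bookkeeping is delicate, because $x$ vanishes on $\mathcal{Z}$ and the cross term simply does not touch $h_{\mathcal{Z}}$.
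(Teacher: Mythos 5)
Your proof is correct and complete. Note that the paper itself offers no proof of this lemma---it is imported verbatim as Equation (19) of reference [4]---so there is no in-paper argument to compare against; your derivation is the standard (and almost certainly the intended) one. Both steps check out: the support split $\|x+h\|_1=\|x_{\mathcal{S}}+h_{\mathcal{S}}\|_1+\|h_{\mathcal{Z}}\|_1$ together with the reverse triangle inequality gives (\ref{eq:1normVectorInequality}), and expanding $\|x+h\|_2^2$, reducing the cross term to $\langle x_{\mathcal{S}},h_{\mathcal{S}}\rangle$ via $x_{\mathcal{Z}}=0$, and applying H\"older with the $(\ell_\infty,\ell_1)$ pairing produces exactly the $\|x_{\mathcal{S}}\|_\infty/\alpha$ inflation of the $\|h_{\mathcal{S}}\|_1$ coefficient in (\ref{eq:2normVectorInequality}), with the $\tfrac{1}{2\alpha}\|h\|_2^2$ term carried through intact.
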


Now we give a NSP type sufficient condition for problem~(\ref{eq:ConvexModel}).

\begin{thm}
\label{thm:NSPcondition}
(Tensor NSP condition for~(\ref{eq:ConvexModel})). Assume $\mathcal{X}^0$ is fixed, problem~(\ref{eq:ConvexModel}) uniquely recovers all tensors $\mathcal{X}^0$ of rank $(r_1, ... ,r_N)$ or less from the measurements $\mathfrak{F}(\mathcal{X}^0)=b$, if all $\mathcal{H}\in \text{Null}(\mathfrak{F})\backslash\{0\}$ satisfy
\begin{equation}\label{eq:NSPcondition}
\sum\limits_{i=1}^N\sum\limits_{j=r_i+1}^{I_i}\sigma_j(H_{(i)})>\sum\limits_{i=1}^N\sum\limits_{j=1}^{r_i}\sigma_j(H_{(i)}).
\end{equation}
\end{thm}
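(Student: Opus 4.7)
The plan is to show that any alternative feasible tensor $\mathcal{X}^*=\mathcal{X}^0+\mathcal{H}$ with $\mathcal{H}\in\text{Null}(\mathfrak{F})\setminus\{0\}$ has strictly larger objective, so $\mathcal{X}^0$ is the unique minimizer of~(\ref{eq:ConvexModel}). Since the tensor trace norm~(\ref{eq:TensorTraceNormDefinition}) averages the mode-$i$ matrix trace norms, this reduces to establishing a per-mode matrix inequality
\begin{equation*}
\|X^0_{(i)}+H_{(i)}\|_*-\|X^0_{(i)}\|_* \;\geq\; \sum_{j=r_i+1}^{I_i}\sigma_j(H_{(i)}) - \sum_{j=1}^{r_i}\sigma_j(H_{(i)})
\end{equation*}
for every mode $i$. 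Summing this over $i=1,\ldots,N$ and dividing by $N$, the right-hand side is strictly positive by hypothesis~(\ref{eq:NSPcondition}), so $\|\mathcal{X}^*\|_*>\|\mathcal{X}^0\|_*$ follows immediately.

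For the per-mode inequality, my plan is to lift the vector NSP inequality (Lemma~\ref{lem:1normAnd2normInequality}, equation~(\ref{eq:1normVectorInequality})) to the matrix level through the singular-value vector. Because $X^0_{(i)}$ has rank at most $r_i$, its singular-value vector $s(X^0_{(i)})=(\sigma_1(X^0_{(i)}),\ldots,\sigma_{I_i}(X^0_{(i)}))$ is supported on $\mathcal{S}=\{1,\ldots,r_i\}$. Setting $h:=s(X^0_{(i)}+H_{(i)})-s(X^0_{(i)})$, we have $\|s(X^0_{(i)})+h\|_1=\|X^0_{(i)}+H_{(i)}\|_*$ and $\|s(X^0_{(i)})\|_1=\|X^0_{(i)}\|_*$, so applying~(\ref{eq:1normVectorInequality}) with $x=s(X^0_{(i)})$ yields
\begin{equation*}
\|X^0_{(i)}+H_{(i)}\|_* \;\geq\; \|X^0_{(i)}\|_* + \sum_{j>r_i}\sigma_j(X^0_{(i)}+H_{(i)}) - \sum_{j\leq r_i}\bigl|\sigma_j(X^0_{(i)}+H_{(i)})-\sigma_j(X^0_{(i)})\bigr|.
\end{equation*}

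To convert the two sums on the right into Ky Fan sums of $\sigma_j(H_{(i)})$, I would combine Lemma~\ref{lem:Theorem7.4.51}, which yields
\begin{equation*}
\sum_{j\leq r_i}\bigl|\sigma_j(X^0_{(i)}+H_{(i)})-\sigma_j(X^0_{(i)})\bigr|+\sum_{j>r_i}\sigma_j(X^0_{(i)}+H_{(i)})\;\leq\;\|H_{(i)}\|_*,
\end{equation*}
with the Weyl-type interlacing bound $\sigma_j(X^0_{(i)}+H_{(i)})\geq\sigma_{j+r_i}(H_{(i)})$, which follows from $\text{rank}(X^0_{(i)})\leq r_i$. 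The first estimate controls the subtracted term from above and the second controls the added term from below; after rearrangement the right-hand side should collapse to $\sum_{j>r_i}\sigma_j(H_{(i)})-\sum_{j\leq r_i}\sigma_j(H_{(i)})$. I expect the hard part to be this bookkeeping: Weyl used in isolation naturally lands on a split at index $2r_i$ rather than at $r_i$, so the argument must couple it with~(\ref{eq:1normMatrixInequality}) in order to shift mass between the head and tail sums and recover the cut point prescribed by~(\ref{eq:NSPcondition}). Once the per-mode inequality is in hand, summing over the $N$ modes and invoking~(\ref{eq:NSPcondition}) concludes the proof.
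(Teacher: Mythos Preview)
Your high-level strategy matches the paper exactly: reduce to a per-mode inequality
\[
\|X^0_{(i)}+H_{(i)}\|_*-\|X^0_{(i)}\|_*\ \geq\ \sum_{j>r_i}\sigma_j(H_{(i)})-\sum_{j\leq r_i}\sigma_j(H_{(i)}),
\]
sum over $i$, and invoke~(\ref{eq:NSPcondition}). The divergence is in how you establish this per-mode bound, and there your proposal has a genuine gap.

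You choose $h=s(X^0_{(i)}+H_{(i)})-s(X^0_{(i)})$ and then try to convert the resulting quantities $\sigma_j(X^0_{(i)}+H_{(i)})$ into $\sigma_j(H_{(i)})$ via Weyl plus~(\ref{eq:1normMatrixInequality}). But combining those two ingredients the way you describe only yields a split at $2r_i$: writing $A=\sum_{j\leq r_i}|\sigma_j(X^0_{(i)}+H_{(i)})-\sigma_j(X^0_{(i)})|$ and $B=\sum_{j>r_i}\sigma_j(X^0_{(i)}+H_{(i)})$, the lemma gives $A+B\leq\|H_{(i)}\|_*$ and Weyl gives $B\geq\sum_{j>2r_i}\sigma_j(H_{(i)})$, so one obtains $B-A\geq 2B-\|H_{(i)}\|_*\geq \sum_{j>2r_i}\sigma_j(H_{(i)})-\sum_{j\leq 2r_i}\sigma_j(H_{(i)})$, which is strictly weaker than what~(\ref{eq:NSPcondition}) requires. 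You acknowledge this mismatch but do not show how to ``shift mass'' to recover the cut at $r_i$; that step is not just bookkeeping, and the tools you list do not deliver it.

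The paper avoids the detour entirely by reversing the order of the two lemmas and making a different choice of $h$. First apply~(\ref{eq:1normMatrixInequality}) with $X=X^0_{(i)}$ and $Y=-H_{(i)}$ to get
\[
\|X^0_{(i)}+H_{(i)}\|_*\ \geq\ \|s(X^0_{(i)})-s(H_{(i)})\|_1,
\]
and \emph{then} apply~(\ref{eq:1normVectorInequality}) with $x=s(X^0_{(i)})$ and $h=-s(H_{(i)})$. Since $\text{supp}(x)\subseteq\{1,\dots,r_i\}$, this yields directly
\[
\|s(X^0_{(i)})-s(H_{(i)})\|_1\ \geq\ \|X^0_{(i)}\|_*+\sum_{j>r_i}\sigma_j(H_{(i)})-\sum_{j\leq r_i}\sigma_j(H_{(i)}),
\]
with the correct cut at $r_i$ and no Weyl interlacing at all. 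The key point you missed is that~(\ref{eq:1normMatrixInequality}) already lets you replace the singular values of the \emph{sum} by those of $H_{(i)}$ before invoking the vector inequality, so the conversion problem never arises.
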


\begin{proof}
Pick any tensor $\mathcal{X}^0$ of rank $(r_1, ... ,r_N)$ or less and let $b=\mathfrak{F}(\mathcal{X}^0)$. For any $\mathcal{H}\in \text{Null}(\mathfrak{F})\backslash\{0\}$, we have $\mathfrak{F}(\mathcal{X}^0+\mathcal{H})=\mathfrak{F}(\mathcal{X}^0)=b$. By using~(\ref{eq:1normMatrixInequality}), we have
\begin{equation}\label{eq:ConvexModelSufficiencyDerivative}
\begin{split}
\|\mathcal{X}^0+\mathcal{H}\|_*=&\frac{1}{N}\sum\limits_{i=1}^N\|X^0_{(i)}+H_{(i)}\|_* \geq \frac{1}{N}\sum\limits_{i=1}^N\|s(X^0_{(i)})-s(H_{(i)})\|_1 \\
\geq &\frac{1}{N}\sum\limits_{i=1}^N(\|X^0_{(i)}\|_*+[\sum\limits_{j=r_i+1}^{I_i}\sigma_j(H_{(i)})-\sum\limits_{j=1}^{r_i}\sigma_j(H_{(i)})]) \\
=& \|\mathcal{X}^0\|_*+\frac{1}{N}\sum\limits_{i=1}^N(\sum\limits_{j=r_i+1}^{I_i}\sigma_j(H_{(i)})-\sum\limits_{j=1}^{r_i}\sigma_j(H_{(i)}))
\end{split}
\end{equation}
where the first inequality follows from~(\ref{eq:1normVectorInequality}). For any nonzero $\mathcal{H}\in \text{Null}(\mathfrak{F})$, $\|\mathcal{H}\|_F^2>0$.
Hence from~(\ref{eq:NSPcondition}) and~(\ref{eq:ConvexModelSufficiencyDerivative}), it follows that $\mathcal{X}^0$ is unique minimizer of~(\ref{eq:ConvexModel}).
\end{proof}

We can extend this result to problem~(\ref{eq:AugModel}) as follows.

\begin{thm}
\label{thm:NSPconditionAug}
(Tensor NSP condition for~(\ref{eq:AugModel})). Assume $\mathcal{X}^0$ is fixed, problem~(\ref{eq:AugModel}) uniquely recovers all tensors $\mathcal{X}^0$ of rank $(r_1, ... ,r_N)$ or less from the measurements $\mathfrak{F}(\mathcal{X}^0)=b$, if all $\mathcal{H}\in \text{Null}(\mathfrak{F})\backslash\{0\}$ satisfy
\begin{equation}\label{eq:NSPconditionAug}
\sum\limits_{i=1}^N\sum\limits_{j=r_i+1}^{I_i}\sigma_j(H_{(i)})\geq\sum\limits_{i=1}^N(1+\frac{\|X^0_{(i)}\|_2}{\alpha})\sum\limits_{j=1}^{r_i}\sigma_j(H_{(i)}).
\end{equation}
\end{thm}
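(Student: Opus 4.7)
My plan is to parallel the proof of Theorem~\ref{thm:NSPcondition} line by line, but use the augmented scalar inequality (\ref{eq:2normVectorInequality}) of Lemma~\ref{lem:1normAnd2normInequality} in place of (\ref{eq:1normVectorInequality}). For any nonzero $\mathcal{H}\in\text{Null}(\mathfrak{F})$, the competitor $\mathcal{X}^0+\mathcal{H}$ is feasible, so it suffices to show that the augmented objective is strictly larger at $\mathcal{X}^0+\mathcal{H}$ than at $\mathcal{X}^0$.

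First, I would work mode by mode. Applying (\ref{eq:1normMatrixInequality}) and (\ref{eq:2normMatrixInequality}) of Lemma~\ref{lem:Theorem7.4.51} with the substitution $(X,Y)\mapsto(X^0_{(i)},-H_{(i)})$ and using $\sigma_j(-M)=\sigma_j(M)$ produces the lower bounds
\[
\|X^0_{(i)}+H_{(i)}\|_* \geq \|s(X^0_{(i)})-s(H_{(i)})\|_1, \qquad \|X^0_{(i)}+H_{(i)}\|_F^2 \geq \|s(X^0_{(i)})-s(H_{(i)})\|_2^2.
\]
Thus the per-mode augmented quantity dominates the vector augmented expression evaluated at $x=s(X^0_{(i)})$ and $x+h=s(X^0_{(i)})-s(H_{(i)})$.

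Next I would invoke (\ref{eq:2normVectorInequality}) with $x=s(X^0_{(i)})$ and $h=-s(H_{(i)})$. Because $X^0_{(i)}$ has rank at most $r_i$, the support $S$ of $x$ lies in $\{1,\dots,r_i\}$, so $\|x_S\|_\infty=\|X^0_{(i)}\|_2$, $\|h_S\|_1=\sum_{j=1}^{r_i}\sigma_j(H_{(i)})$, and $\|h_Z\|_1=\sum_{j=r_i+1}^{I_i}\sigma_j(H_{(i)})$. Summing the resulting per-mode bounds with weight $1/N$, and using the unfolding-invariance of the Frobenius norm (so that the mode averages of $\|X^0_{(i)}+H_{(i)}\|_F^2$ and $\|H_{(i)}\|_F^2$ collapse to $\|\mathcal{X}^0+\mathcal{H}\|_F^2$ and $\|\mathcal{H}\|_F^2$), yields
\[
\|\mathcal{X}^0+\mathcal{H}\|_* + \tfrac{1}{2\alpha}\|\mathcal{X}^0+\mathcal{H}\|_F^2 \;\geq\; \|\mathcal{X}^0\|_* + \tfrac{1}{2\alpha}\|\mathcal{X}^0\|_F^2 + \tfrac{1}{N}\Delta + \tfrac{1}{2\alpha}\|\mathcal{H}\|_F^2,
\]
where $\Delta$ is exactly the difference that hypothesis (\ref{eq:NSPconditionAug}) asserts to be nonnegative. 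Since $\mathcal{H}\neq 0$ forces $\|\mathcal{H}\|_F^2>0$, the strict increase of the augmented objective follows, giving uniqueness.

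The main obstacle I anticipate is orienting the two singular-value inequalities of Lemma~\ref{lem:Theorem7.4.51}: as stated they bound $\ell_p$ differences of singular values from above, but we need lower bounds on $\|\cdot\|_*$ and $\|\cdot\|_F^2$. The sign trick $\sigma_j(-M)=\sigma_j(M)$ is what flips them into the useful direction. After that, the only new feature compared with Theorem~\ref{thm:NSPcondition} is the extra quadratic term supplied by (\ref{eq:2normVectorInequality}), and it is precisely this $\tfrac{1}{2\alpha}\|\mathcal{H}\|_F^2$ contribution that permits the NSP hypothesis to relax from the strict $>$ of (\ref{eq:NSPcondition}) to the non-strict $\geq$ of (\ref{eq:NSPconditionAug}) while still delivering unique recovery.
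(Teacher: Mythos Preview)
Your proposal is correct and follows essentially the same route as the paper: apply Lemma~\ref{lem:Theorem7.4.51} mode by mode to pass from matrix norms to singular-value vectors, then apply the augmented vector inequality~(\ref{eq:2normVectorInequality}) with $x=s(X^0_{(i)})$ and $h=-s(H_{(i)})$, sum with weight $1/N$, and use $\|\mathcal{H}\|_F^2>0$ to extract strictness. Your explicit mention of the sign substitution $(X,Y)\mapsto(X^0_{(i)},-H_{(i)})$ and the observation that the extra $\tfrac{1}{2\alpha}\|\mathcal{H}\|_F^2$ term is what permits relaxing the NSP hypothesis from strict to non-strict are exactly the points the paper relies on (though it leaves them implicit).
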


\begin{proof}
Pick any tensor $\mathcal{X}^0$ of rank $(r_1, ... ,r_N)$ or less and let $b=\mathfrak{F}(\mathcal{X}^0)$. For any nonzero $\mathcal{H}\in \text{Null}(\mathfrak{F})$, we have $\mathfrak{F}(\mathcal{X}^0+\mathcal{H})=\mathfrak{F}(\mathcal{X}^0)=b$. Thus
\begin{equation}\label{eq:SufficiencyDerivative}
\begin{split}
&\|\mathcal{X}^0+\mathcal{H}\|_*+\frac{1}{2\alpha}\|\mathcal{X}^0+\mathcal{H}\|_F^2=\frac{1}{N}\sum\limits_{i=1}^N(\|X^0_{(i)}+H_{(i)}\|_*+\frac{1}{2\alpha}\|X^0_{(i)}+H_{(i)}\|_F^2) \\
\geq &\frac{1}{N}\sum\limits_{i=1}^N(\|s(X^0_{(i)})-s(H_{(i)})\|_1+\frac{1}{2\alpha}\|s(X^0_{(i)})-s(H_{(i)})\|_2^2)\\
\geq &\frac{1}{N}\sum\limits_{i=1}^N([\|X^0_{(i)}\|_*+\frac{1}{2\alpha}\|X^0_{(i)}\|_F^2]+[\sum\limits_{j=r_i+1}^{I_i}\sigma_j(H_{(i)})-(1+\frac{\|X^0_{(i)}\|_2}{\alpha})\sum\limits_{j=1}^{r_i}\sigma_j(H_{(i)})]+\frac{1}{2\alpha}\|H_{(i)}\|_F^2)\\
=& \|\mathcal{X}^0\|_*+\frac{1}{2\alpha}\|\mathcal{X}^0\|_F^2+\frac{1}{2\alpha}\|\mathcal{H}\|_F^2+\frac{1}{N}\sum\limits_{i=1}^N(\sum\limits_{j=r_i+1}^{I_i}\sigma_j(H_{(i)})-(1+\frac{\|X^0_{(i)}\|_2}{\alpha})\sum\limits_{j=1}^{r_i}\sigma_j(H_{(i)}))
\end{split}
\end{equation}
where the first inequality follows from~(\ref{eq:1normMatrixInequality}) and~(\ref{eq:2normMatrixInequality}), and the second inequality follows from~(\ref{eq:1normVectorInequality}) and~(\ref{eq:2normVectorInequality}). For any nonzero $\mathcal{H}\in \text{Null}(\mathfrak{F})$, $\|\mathcal{H}\|_F^2>0.$ Hence, from~(\ref{eq:SufficiencyDerivative}) and~(\ref{eq:NSPconditionAug}), it follows that $\mathcal{X}^0+\mathcal{H}$ leads to a strictly worse objective than $\mathcal{X}^0$. That is, $\mathcal{X}^0$ is the unique solution to problem~(\ref{eq:AugModel}).

\end{proof}

\begin{myRem}
For any finite $\alpha>0$,~(\ref{eq:NSPconditionAug}) is stronger than~(\ref{eq:NSPcondition}) due to the extra term $\frac{\|X^0_{(i)}\|_2}{\alpha}$. Since various
uniform recovery results establish conditions that guarantee~(\ref{eq:NSPcondition}), one can tighten these conditions so that
they guarantee~(\ref{eq:NSPconditionAug}) and thus the uniform recovery by problem~(\ref{eq:AugModel}). How much tighter these conditions have
to be depends on the value $\frac{\|X^0_{(i)}\|_2}{\alpha}$.
\end{myRem}

\subsection{Tensor restricted isometry principle}

In this subsection, we generalize the RIP-based guarantees to tensor case and show that any $\alpha\geq10\mathop {\max}\limits_{i}\|X^0_{(i)}\|_2$ guarantees exact recovery by~(\ref{eq:AugModel}).
\begin{mydef}
(Tensor RIP). Let $\mathfrak{M}_{(r_1,r_2,...,r_N)}:=\{\mathcal{X}\in \mathbb{R}^{I_1\times I_2\times \cdots \times I_N}:\text{rank}(X_{(n)})\leq r_n, n=1,...,N$\}. The RIP constant $\delta_{(r_1,r_2,...,r_N)}$ of linear operator $\mathfrak{F}$ is the smallest value such that
\begin{equation}\label{eq:TensorRIP}
(1-\delta_{(r_1,r_2,...,r_N)})\|\mathcal{X}\|_F^2\leq\|\mathfrak{F}(\mathcal{X})\|_2^2\leq(1+\delta_{(r_1,r_2,...,r_N)})\|\mathcal{X}\|_F^2
\end{equation}
holds for all $\mathcal{X}\in\mathfrak{M}_{(r_1,r_2,...,r_N)}$.
\end{mydef}

The following recovery theorems will characterize the power of the tensor restricted isometry
constants. The first theorem generalizes Lemma 1.3 in~\cite{candes2005decoding} and Theorem 3.2 in\cite{recht2007guaranteed} to low-rank tensor recovery.

\begin{thm}
\label{thm:Unique}
Suppose $\delta_{(2r_1,2r_2,...,2r_N)} < 1$ for some $(r_1,r_2,...,r_N)\succeq(1,1,...,1)$. Then $\mathcal{X}^0$ is the only tensor of rank at most $(r_1,r_2,...,r_N)$ satisfying $\mathfrak{F}(\mathcal{X})=b$.
\end{thm}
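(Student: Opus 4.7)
The plan is to argue by contradiction in the standard RIP style, with the only real tensorial ingredient being that the difference of two low-rank tensors still lies in the feasible low-rank set $\mathfrak{M}_{(2r_1,\dots,2r_N)}$.

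First I would suppose, for contradiction, that there is some $\mathcal{Y}\neq\mathcal{X}^0$ with $\text{rank}(\mathcal{Y})\preceq(r_1,\dots,r_N)$ and $\mathfrak{F}(\mathcal{Y})=b$. Set $\mathcal{H}:=\mathcal{X}^0-\mathcal{Y}$, so that $\mathcal{H}\neq 0$ and $\mathfrak{F}(\mathcal{H})=0$.

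Next I would bound the $n$-rank of $\mathcal{H}$. For each mode $i$, the unfolding operation is linear, so $H_{(i)}=X^0_{(i)}-Y_{(i)}$, and by the matrix subadditivity of rank we have
\begin{equation*}
\text{rank}_i(\mathcal{H})=\text{rank}(H_{(i)})\leq \text{rank}(X^0_{(i)})+\text{rank}(Y_{(i)})\leq 2r_i.
\end{equation*}
Therefore $\mathcal{H}\in\mathfrak{M}_{(2r_1,2r_2,\dots,2r_N)}$.

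Finally, I would apply the tensor RIP~(\ref{eq:TensorRIP}) with parameters $(2r_1,\dots,2r_N)$ to $\mathcal{H}$:
\begin{equation*}
(1-\delta_{(2r_1,\dots,2r_N)})\,\|\mathcal{H}\|_F^2\;\leq\;\|\mathfrak{F}(\mathcal{H})\|_2^2\;=\;0.
\end{equation*}
Since the hypothesis $\delta_{(2r_1,\dots,2r_N)}<1$ makes the left-hand coefficient strictly positive, this forces $\|\mathcal{H}\|_F=0$, i.e.\ $\mathcal{H}=0$, contradicting $\mathcal{Y}\neq\mathcal{X}^0$. Hence no such $\mathcal{Y}$ exists and $\mathcal{X}^0$ is the unique rank-$(r_1,\dots,r_N)$ (or less) solution of $\mathfrak{F}(\mathcal{X})=b$.

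There is essentially no hard step here; the only conceptual point that has to be checked carefully is the rank-doubling inequality $\text{rank}(H_{(i)})\leq 2r_i$, which is what motivates using the RIP constant at $(2r_1,\dots,2r_N)$ rather than $(r_1,\dots,r_N)$. Everything else is the direct translation of the matrix/vector contradiction argument of Candès--Tao and Recht--Fazel--Parrilo into the tensor setting, using the unfoldings to pass from tensor rank to matrix rank.
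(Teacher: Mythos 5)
Your proof is correct and follows essentially the same contradiction argument as the paper: form $\mathcal{H}=\mathcal{X}^0-\mathcal{Y}$, note via subadditivity of rank of the unfoldings that $\mathcal{H}\in\mathfrak{M}_{(2r_1,\dots,2r_N)}$, and apply the RIP to force $\mathcal{H}=0$. You spell out the rank-doubling step mode by mode, which the paper only asserts (remarking afterwards that subadditivity of rank is the key property), but the substance is identical.
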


\begin{proof}
Assume, on the contrary, that there exists a tensor with rank $(r_1,r_2,...,r_N)$ or less satisfying $\mathfrak{F}(\mathcal{X})=b$ and $\mathcal{X}\neq\mathcal{X}_0$. Then $\mathcal{Z}:=\mathcal{X}-\mathcal{X}_0$ is a nonzero tensor of rank at most $(2r_1,2r_2,...,2r_N)$, and $\mathfrak{F}(\mathcal{Z})=0$. But then we would have $0=\|\mathfrak{F}(\mathcal{Z})\|_2^2\geq(1-\delta_{(2r_1,2r_2,...,2r_N)})\|\mathcal{Z}\|_F^2>0$ which is a contradiction.
\end{proof}


The proof of the preceding theorem is identical to the argument given by Candes and Tao and
is an immediate consequence of our definition of the constant $\delta_{(r_1,r_2,...,r_N)}$. No adjustment is necessary in the
transition from sparse vectors and low-rank matrices to low-rank tensors. The key property used is the sub-additivity of
the rank. Adapting results in~\cite{lai2012augmented,mo2011new}, we give the uniform recovery conditions for~(\ref{eq:ConvexModel}) below.

\begin{thm}
\label{thm:RIPconditionFor1}
(RIP condition for exact recovery by~(\ref{eq:ConvexModel})).
Let $\mathcal{X}^0$ be a tensor with rank $(r_1,r_2,...,r_N)$ or less. Problem~(\ref{eq:ConvexModel}) exactly recovers $\mathcal{X}^0$ from measurements $b=\mathfrak{F}(\mathcal{X}^0)$ if $\mathfrak{F}$ satisfies the RIP with $\delta_{(I_1,...,2r_n,...,I_N)}<0.4931$, for $n=1,...N$.
\end{thm}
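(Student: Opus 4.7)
My plan is to adapt the matrix RIP argument of Mo--Li~\cite{mo2011new} to the tensor setting, carrying the analysis through the unfolding nuclear norms mode by mode. Let $\mathcal{X}^{*}$ be any minimizer of~(\ref{eq:ConvexModel}) and set $\mathcal{H}:=\mathcal{X}^{*}-\mathcal{X}^{0}$. Since $\mathfrak{F}(\mathcal{X}^{*})=\mathfrak{F}(\mathcal{X}^{0})=b$, we have $\mathcal{H}\in\text{Null}(\mathfrak{F})$, so it suffices to prove $\mathcal{H}=0$. The optimality of $\mathcal{X}^{*}$ gives the starting inequality $\sum_{n=1}^{N}\|X^{0}_{(n)}+H_{(n)}\|_{*}\leq\sum_{n=1}^{N}\|X^{0}_{(n)}\|_{*}$.

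Next I would perform, separately in each mode, the standard Recht--Fazel--Parrilo tangent-space decomposition. Fix $n$, let $X^{0}_{(n)}=U_{n}\Sigma_{n}V_{n}^{\top}$ be the compact SVD of rank $r_{n}$, set $P_{U_{n}}=U_{n}U_{n}^{\top}$, $P_{V_{n}}=V_{n}V_{n}^{\top}$, and split
\begin{equation*}
H_{(n)}=H_{(n)}^{T}+H_{(n)}^{c},\qquad H_{(n)}^{c}=(I-P_{U_{n}})H_{(n)}(I-P_{V_{n}}),
\end{equation*}
so that $H_{(n)}^{T}$ has rank at most $2r_{n}$ while the row/column spaces of $H_{(n)}^{c}$ are orthogonal to those of $X^{0}_{(n)}$. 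This orthogonality produces $\|X^{0}_{(n)}+H_{(n)}^{c}\|_{*}=\|X^{0}_{(n)}\|_{*}+\|H_{(n)}^{c}\|_{*}$; chaining with the triangle inequality $\|X^{0}_{(n)}+H_{(n)}\|_{*}\geq \|X^{0}_{(n)}+H_{(n)}^{c}\|_{*}-\|H_{(n)}^{T}\|_{*}$ and summing over $n$ turns the starting inequality into $\sum_{n}\|H_{(n)}^{c}\|_{*}\leq\sum_{n}\|H_{(n)}^{T}\|_{*}$.

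Finally I would deploy the tensor RIP via a refined splitting. Using that the tangent space is the sum of two rank-$r_{n}$ subspaces, write $H_{(n)}^{T}=H_{(n)}^{T,A}+H_{(n)}^{T,B}$ with each summand of rank $\leq r_{n}$, and decompose $H_{(n)}^{c}=\sum_{k\geq 1}H_{(n)}^{c,k}$ into rank-$r_{n}$ SVD blocks in decreasing singular-value order; this yields the classical tail bound $\sum_{k\geq 2}\|H_{(n)}^{c,k}\|_{F}\leq \|H_{(n)}^{c}\|_{*}/\sqrt{r_{n}}$. Refolding, any pair of these pieces gives a tensor whose $n$-th unfolding has rank at most $2r_{n}$, so the tensor RIP with constant $\delta_{n}:=\delta_{(I_{1},\ldots,2r_{n},\ldots,I_{N})}$ applies to that pair. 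Using $\mathfrak{F}(\mathcal{H})=0$ to equate $\mathfrak{F}$ of a head pair with the negative of $\mathfrak{F}$ of the tail, together with a Cand\`es--Plan bilinear estimate $|\langle\mathfrak{F}(\mathcal{A}),\mathfrak{F}(\mathcal{B})\rangle-\langle\mathcal{A},\mathcal{B}\rangle|\leq \delta_{n}\|\mathcal{A}\|_{F}\|\mathcal{B}\|_{F}$ for tensors whose mode-$n$ ranks sum to at most $2r_{n}$, I would derive an inequality of the form $\|\mathcal{H}\|_{F}\leq \rho(\delta_{n})\|\mathcal{H}\|_{F}$, which forces $\mathcal{H}=0$ once $\rho(\delta_{n})<1$. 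The main obstacle is pinning down the sharp threshold $0.4931$: a naive combination would yield only the weaker $\delta_{3r_{n}}$ constant, so I must mimic exactly the scalar optimization of Mo--Li to keep everything inside the $2r_{n}$ regime, and then pass from the per-mode estimates to a single inequality for $\mathcal{H}$ via the $\frac{1}{N}\sum_{n}$ averaging in $\|\cdot\|_{*}$ without losing constants.
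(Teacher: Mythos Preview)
Your overall strategy---run the Mo--Li RIP machinery mode by mode and assemble the pieces---is reasonable, but it differs from the paper's route in two structural ways, and the obstacles you yourself flag at the end are exactly where the paper does something different.

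First, the paper does \emph{not} use the Recht--Fazel--Parrilo tangent-space decomposition based on $X^{0}_{(n)}$. Instead it takes the HOSVD of $\mathcal{H}$ itself, $\mathcal{H}=\mathcal{S}\times_{1}U^{(1)}\cdots\times_{N}U^{(N)}$, and for a fixed mode $n$ slices the core along index $i_{n}$ into consecutive blocks of size $r_{n}$, producing tensors $\mathcal{H}_{0},\mathcal{H}_{1},\ldots$ each of mode-$n$ rank at most $r_{n}$. The associated scalar sequence $h=(\sigma_{1}^{(n)},\ldots,\sigma_{I_{n}}^{(n)})$ is just the singular-value vector of $H_{(n)}$, and $\|\mathcal{H}_{i}\|_{F}=\|h_{i}\|_{2}$. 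Because every block has mode-$n$ rank $\leq r_{n}$, any pair has mode-$n$ rank $\leq 2r_{n}$, so the RIP constant $\delta_{(I_{1},\ldots,2r_{n},\ldots,I_{N})}$ applies to $\mathcal{H}_{0}+\mathcal{H}_{1}$ and to all cross terms. This is precisely how the paper stays inside the $2r_{n}$ regime and reaches the Mo--Li threshold $0.4931$; your tangent-space piece $H_{(n)}^{T}$ already has rank up to $2r_{n}$, so your ``head plus first tail block'' has rank up to $3r_{n}$ and would indeed force a $\delta_{3r_{n}}$-type constant unless you change the decomposition.

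Second, the paper does not start from the optimality inequality $\sum_{n}\|X^{0}_{(n)}+H_{(n)}\|_{*}\leq\sum_{n}\|X^{0}_{(n)}\|_{*}$ at all. Using only $\mathfrak{F}(\mathcal{H})=0$ and the RIP, the Mo--Li optimization yields, for each $n$ separately, $\|h_{0}\|_{1}\leq\theta\sum_{i\geq 1}\|h_{i}\|_{1}$ with $\theta<1$ when $\delta_{(I_{1},\ldots,2r_{n},\ldots,I_{N})}<0.4931$. That is exactly the per-mode NSP inequality $\sum_{j>r_{n}}\sigma_{j}(H_{(n)})>\sum_{j\leq r_{n}}\sigma_{j}(H_{(n)})$. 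Summing over $n$ gives condition~(\ref{eq:NSPcondition}), and Theorem~\ref{thm:NSPcondition} then delivers exact recovery. So the logical structure is RIP $\Rightarrow$ per-mode NSP $\Rightarrow$ summed NSP $\Rightarrow$ recovery; the role of $\mathcal{X}^{0}$ enters only at the very last step through the NSP theorem. This completely sidesteps your second obstacle of turning a summed cone condition $\sum_{n}\|H_{(n)}^{c}\|_{*}\leq\sum_{n}\|H_{(n)}^{T}\|_{*}$ into per-mode control without losing constants: no cone condition is ever needed, because the target is NSP rather than a direct contradiction on $\|\mathcal{H}\|_{F}$.

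In short, both of your stated obstacles are real for the tangent-space route, and the paper avoids them by (i) decomposing $\mathcal{H}$ via its own HOSVD into mode-$n$ rank-$r_{n}$ blocks, and (ii) aiming for the NSP condition of Theorem~\ref{thm:NSPcondition} rather than arguing directly from optimality.
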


\begin{proof}
For any nonzero $\mathcal{H}\in \text{Null}(\mathfrak{F})$, $\|\mathcal{H}\|_F>0$, let $\mathcal{H}=\mathcal{S}\times_1 U^{(1)}\times_2 U^{(2)}...\times_N U^{(N)}$ be the HOSVD of $\mathcal{H}$.
 From Proposition 3.7 of~\cite{kolda2006multilinear} we have $\text{rank}(H_{(n)})\leq\text{rank}(S_{(n)}), (n=1,..,N)$, thus we have $\text{rank}(\mathcal{H})\preceq\text{rank}(\mathcal{S})$.
We decompose $\mathcal{H}=\mathcal{H}_0+\mathcal{H}_1+...$, where
\begin{equation}\label{eq:NormEquation}
\begin{split}
\mathcal{H}_0=\mathcal{S}_{\{i_n:1,...,r_n\}}\times_1 U^{(1)}\times_2 U^{(2)}...\times_N U^{(N)},   \\
\mathcal{H}_1=\mathcal{S}_{\{i_n:r_n+1,...,2r_n\}}\times_1 U^{(1)}\times_2 U^{(2)}...\times_N U^{(N)},...,
\end{split}
\end{equation}
and $\mathcal{S}_{\{i_n:\alpha\}}$ is the tensor obtained by fixing the n-th index to the index set $\alpha$, others to zero. Similarly we have $\text{rank}(\mathcal{H})_i\preceq\text{rank}(\mathcal{S}_{\{i_n:ir_n+1,...,(i+1)r_n\}})\preceq(I_1,...,r_n,...,I_n),(i=0,1,...)$.
Due to the unitarily
invariant of the Frobenius norm, we have $\|\mathcal{H}_0\|_F=\|\mathcal{S}_{\{i_n:1,...,r_n\}}\|_F$, $\|\mathcal{H}_1\|_F=\|\mathcal{S}_{\{i_n:r_n+1,...,2r_n\}}\|_F$, .... Let $h=(\sigma_{1}^{(n)},\sigma_{2}^{(n)},...,\sigma_{I_n}^{(n)})$ and $h_0=(\sigma_{1}^{(n)},...,\sigma_{r_n}^{(n)})$, $h_1=(\sigma_{r_n+1}^{(n)},...,\sigma_{2r_n}^{(n)})$, $h_2=(\sigma_{2r_n+1}^{(n)},...,\sigma_{3r_n}^{(n)})$,..., where $\sigma_{i}^{(n)}$ is the $i$-th largest mode-$n$ singular value.
From the definition of HOSVD and Lemma 2, we have
\begin{equation}\label{eq:NormEquation2}
\|\mathcal{H}_i\|_F^2=\|\mathcal{S}_{\{i_n:ir_n+1,...,(i+1)r_n\}}\|_F^2=\sum\limits_{j=ir_n+1}^{(i+1)r_n}\left (\sigma_j^{(n)}\right )^2=\|h_i\|_2^2, (i=0,1,...).
\end{equation}
Due to the mean-inequation, one has $\|h_i\|_2^2 \geq \|h_i\|_1^2/r_n, (i=0,1,...)$.
Assume that $\|h_1\|_1=t(\sum\nolimits_{i\geq1}\|h_i\|_1)$ with some $t\in[0,1]$. Then we have $(\sum\nolimits_{i\geq2}\|h_i\|_1)=(1-t)(\sum\nolimits_{i\geq1}\|h_i\|_1)$.

First from Lemma 2.1, Lemma 2.2 in~\cite{mo2011new} and~(\ref{eq:NormEquation}) we have
\begin{equation}\label{eq:LemaDerivative_1}
\sum\limits_{i\geq2}\|\mathcal{H}_i\|_F^2\leq \sigma_{2r_{n}+1}^{(n)}\sum\limits_{i\geq2}\|h_i\|_1\leq \frac{t}{r_n}(1-t)(\sum\limits_{i\geq1}\|h_i\|_1)^2
\end{equation}
and
\begin{equation}\label{eq:LemaDerivative0}
r_n^{\frac{1}{2}}\|\mathcal{H}_i\|_F\leq \|h_i\|_1+r_n(\sigma_{ir_n+1}^{(n)}-\sigma_{ir_{n+1}+r_{n+1}}^{(n)})/4, i=2,3,....
\end{equation}
From~(\ref{eq:LemaDerivative0}) we have
\begin{equation}\label{eq:LemaDerivative01}
r_n^{\frac{1}{2}}\sum\limits_{i\geq2}\|\mathcal{H}_i\|_F\leq \sum\limits_{i\geq2}\|h_i\|_1+r_n|d_{2r_n+1}|/4\leq  \sum\limits_{i\geq2}\|h_i\|_1+\|h_1\|_1/4=(1-3t/4)\sum\limits_{i\geq1}\|h_i\|_1.
\end{equation}
So we have
\begin{equation}\label{eq:LemaDerivative1}
\begin{split}
\|\mathfrak{F}(\mathcal{H}_0+\mathcal{H}_1))\|_2^2\geq(1-\delta_{(I_1,I_2,...,2r_n,...,I_N)})\|\mathcal{H}_0+\mathcal{H}_1\|_F^2 \\
=(1-\delta_{(I_1,I_2,...,2r_n,...,I_N)})\left (\|\mathcal{H}_0\|_F^2+\|\mathcal{H}_1\|_F^2 \right ) \\
\geq (1-\delta_{(I_1,I_2,...,2r_n,...,I_N)})\left (\|h_0\|_1^2+\|h_1\|_1^2 \right )/r_n \\
\end{split}
\end{equation}
and
\begin{equation}\label{eq:LemaDerivative2}
\begin{split}
&\|\mathfrak{F}(\mathcal{H}_2+\mathcal{H}_{3}+...))\|_2^2=\sum\limits_{j,k\geq2}<\mathfrak{F}(\mathcal{H}_{j}),\mathfrak{F}(\mathcal{H}_{k})>\\
&=\sum\limits_{j\geq2}<\mathfrak{F}(\mathcal{H}_{j}),\mathfrak{F}(\mathcal{H}_{j})>+2\sum\limits_{2\leq j< k}<\mathfrak{F}(\mathcal{H}_{j}),\mathfrak{F}(\mathcal{H}_{k})>\\
&\leq\sum\limits_{j\geq2}(1+\delta_{(I_1,I_2,...,r_n,...,I_N)})\|\mathcal{H}_{j}\|_F^2+2\delta_{(I_1,I_2,...,2r_n,...,I_N)}\sum\limits_{2\leq j< k}\|\mathcal{H}_{j}\|_F\|\mathcal{H}_{k}\|_F\\
&=\sum\limits_{j\geq2}\|\mathcal{H}_{j}\|_F^2+\delta_{(I_1,I_2,...,2r_n,...,I_N)}\left (\sum\limits_{j\geq2}\|\mathcal{H}_{j}\|_F \right )^2
\end{split}
\end{equation}
Further more, by~(\ref{eq:LemaDerivative_1}),(\ref{eq:LemaDerivative01}) we have
\begin{equation}\label{eq:LemaDerivative3}
\begin{split}
&\|\mathfrak{F}(\mathcal{H}_2+\mathcal{H}_{3}+...))\|_2^2\leq\sum\limits_{j\geq2}\|\mathcal{H}_{j}\|_F^2+\delta_{(I_1,I_2,...,2r_n,...,I_N)}\left (\sum\limits_{j\geq2}\|\mathcal{H}_{j}\|_F \right )^2
\\\leq&\frac{t}{r_n}(1-t)(\sum\limits_{i\geq1}\|h_i\|_1)^2+(1-3t/4)\sum\limits_{i\geq1}\|h_i\|_1
\\=&\frac{t(1-t)+\delta_{(I_1,I_2,...,2r_n,...,I_N)}(1-3t/4)^2}{r_n}(\sum\limits_{i\geq1}\|h_i\|_1)^2.
\end{split}
\end{equation}
Since $\mathfrak{F}(\mathcal{H}_0+\mathcal{H}_1+\mathcal{H}_2...)=\mathfrak{F}(\mathcal{H})=0$, we have $\|\mathfrak{F}(\mathcal{H}_0+\mathcal{H}_1))\|_2^2=\|\mathfrak{F}(\mathcal{H}_2+\mathcal{H}_{3}+...))\|_2^2$, by the above equations we have
\begin{equation}\label{eq:LemaDerivative4}
\begin{split}
(1-\delta_{(I_1,I_2,...,2r_n,...,I_N)})\left (\|h_0\|_1^2+\|h_1\|_1^2 \right )/r_n \\
\leq\frac{t(1-t)+\delta_{(I_1,I_2,...,2r_n,...,I_N)}(1-3t/4)^2}{r_n}(\sum\limits_{i\geq1}\|h_i\|_1)^2.
\end{split}
\end{equation}
Hence, let $\delta=\delta_{(I_1,I_2,...,2r_n,...,I_N)}$
\begin{equation}\label{eq:LemaDerivative5}
\|h_0\|_1^2 \leq\frac{1}{1-\delta}\left [ \delta +(1-3\delta/2)t-(2-25\delta/16)t^2\right](\sum\limits_{i\geq1}\|h_i\|_1)^2
\end{equation}
We have a quadratic polynomial of $t$ with $t\in[0,1]$ in the right-hand side of the above inequality. Hence, by calculus, this quadratic polynomial achieves its maximal value at $t=\frac{1-3\delta/2}{4-25\delta/8}\in[0,1]$. Therefore we obtain $\|h_0\|_1\leq\theta(\sum\limits_{i\geq1}\|h_i\|_1)$, where
\begin{equation}\label{eq:ThetaDef}
\theta=\sqrt{\frac{4(1+5\delta-4\delta^2)}{(1-\delta)(32-25\delta)}}.
\end{equation}
$\delta<(77-\sqrt{1337})/82\approx 0.4931$, then $\theta<1$, we get $\|h_0\|_1<(\sum\limits_{i\geq1}\|h_i\|_1)$, which is
\begin{equation}\label{eq:LemaDerivative6}
\sum\limits_{j=r_n+1}^{I_n}\sigma_j^{(n)}>\sum\limits_{j=1}^{r_n}\sigma_j^{(n)}.
\end{equation}
If for all $n=(1,...,N)$, we have~(\ref{eq:LemaDerivative6}), then we get~(\ref{eq:NSPcondition}).
\end{proof}

Next we carry out a similar study for the augmented model~(\ref{eq:AugModel}).
\begin{thm}
\label{thm:RIPconditionFor2}
(RIP condition for exact recovery by~(\ref{eq:AugModel})). Let $\mathcal{X}^0$ be a tensor with rank $(r_1,r_2,...,r_N)$ or less. The augmented model~(\ref{eq:ConvexModelVariant}) exactly recovers $\mathcal{X}^0$ from measurements $b=\mathfrak{F}(\mathcal{X}^0)$ if $\mathfrak{F}$ satisfies the RIP with $\delta_{(I_1,...,2r_n,...,I_N)}<0.4404, n=1,...N$ and $\alpha\geq10\mathop {\max}\limits_{i}\|X^0_{(i)}\|_2$.
\end{thm}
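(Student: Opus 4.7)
The plan is to reduce to the augmented NSP condition (Theorem~\ref{thm:NSPconditionAug}). For each mode~$n$ and every nonzero $\mathcal{H}\in\text{Null}(\mathfrak{F})$ I would aim to establish
\[
\sum_{j=r_n+1}^{I_n}\sigma_j(H_{(n)})\geq\left(1+\frac{\|X^0_{(n)}\|_2}{\alpha}\right)\sum_{j=1}^{r_n}\sigma_j(H_{(n)}),
\]
so that summing over $n$ reproduces exactly condition~(\ref{eq:NSPconditionAug}) and uniqueness follows from Theorem~\ref{thm:NSPconditionAug}; the strict inequality between objectives needed there is supplied by the residual $\frac{1}{2\alpha}\|\mathcal{H}\|_F^2$ term.

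First, I would run the HOSVD chunking $\mathcal{H}=\mathcal{H}_0+\mathcal{H}_1+\cdots$ along mode~$n$ exactly as in the proof of Theorem~\ref{thm:RIPconditionFor1}, setting $\delta=\delta_{(I_1,\ldots,2r_n,\ldots,I_N)}$ and letting $h_i$ denote the corresponding ordered block of mode-$n$ singular values. The whole RIP-based cone derivation, including the optimization over $t\in[0,1]$ of the quadratic polynomial in~(\ref{eq:LemaDerivative5}), is insensitive to the augmented regularizer and transports verbatim, yielding
\[
\|h_0\|_1\leq\theta\sum_{i\geq 1}\|h_i\|_1,\qquad \theta=\sqrt{\frac{4(1+5\delta-4\delta^2)}{(1-\delta)(32-25\delta)}}.
\]

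Second, the hypothesis $\alpha\geq 10\max_i\|X^0_{(i)}\|_2$ gives $1/(1+\|X^0_{(n)}\|_2/\alpha)\geq 10/11$ for every $n$, so it suffices to force $\theta\leq 10/11$, i.e.\ $\theta^2\leq 100/121$. Unfolding this rational inequality in $\delta$ and solving pins down precisely the threshold $\delta<0.4404$ stated in the theorem. Combining the two bounds gives the chain
\[
\|h_0\|_1\leq\frac{10}{11}\sum_{i\geq 1}\|h_i\|_1\leq\frac{1}{1+\|X^0_{(n)}\|_2/\alpha}\sum_{i\geq 1}\|h_i\|_1,
\]
which is the desired mode-$n$ inequality, and summing over $n$ delivers~(\ref{eq:NSPconditionAug}).

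The main obstacle is essentially arithmetic rather than conceptual: one must confirm that the maximizer $t^\ast=(1-3\delta/2)/(4-25\delta/8)$ of the quadratic polynomial in the previous proof still lies in $[0,1]$ for $\delta<0.4404$ (which it does, a fortiori from the range $\delta<0.4931$ already exploited in Theorem~\ref{thm:RIPconditionFor1}), and to verify that the algebraic root of $\theta(\delta)=10/11$ matches numerically the stated threshold $0.4404$. No analytic machinery beyond the RIP cone decomposition already in place is required; the augmented model benefits here only through the extra $\frac{1}{2\alpha}\|\mathcal{H}\|_F^2$ slack which lets us trade a looser RIP constant for the tighter cone ratio $10/11$ demanded by the augmented NSP.
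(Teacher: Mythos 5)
Your proposal is correct and follows essentially the same route as the paper: both reuse the cone bound $\|h_0\|_1\leq\theta\sum_{i\geq1}\|h_i\|_1$ established in the proof of Theorem~\ref{thm:RIPconditionFor1} and reduce to the augmented null-space condition~(\ref{eq:NSPconditionAug}) of Theorem~\ref{thm:NSPconditionAug} by requiring $(1+\|X^0_{(n)}\|_2/\alpha)\,\theta\leq1$. The paper phrases this as $\alpha\geq(\theta^{-1}-1)^{-1}\|X^0_{(n)}\|_2\approx9.9849\,\|X^0_{(n)}\|_2$ at $\delta=0.4404$, while you equivalently solve $\theta\leq10/11$ for $\delta$; the arithmetic agrees.
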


\begin{proof}
 The proof of Theorem~\ref{thm:RIPconditionFor1} establishes that any nonzero $\mathcal{H}\in \text{Null}(\mathfrak{F})$ satisfies $\|h_0\|_1\leq\theta(\sum\limits_{i\geq1}\|h_i\|_1)$. Hence,
if $(1+\frac{\|X^0_{(i)}\|_2}{\alpha})\theta\leq1$, notice $\theta<1$, we have
\begin{equation}\label{eq:LemaDerivative7}
     \alpha\geq(\theta^{-1}-1)^{-1} \|X^0_{(i)}\|_2=\frac{\|X^0_{(i)}\|_2\sqrt{4(1+5\delta-4\delta^2)}}{\sqrt{(1-\delta)(32-25\delta)}-\sqrt{4(1+5\delta-4\delta^2)}}.
\end{equation}
For $\delta=0.0.4404$, we obtain $(\theta^{-1}-1)^{-1} \|X^0_{(i)}\|_2\approx9.9849\|X^0_{(i)}\|_2\leq\alpha$, which proves the theorem.
\end{proof}

\begin{myRem}
Different values of $\delta_{(I_1,...,2r_n,...,I_N)}, n=1,...N$ are associated with different conditions on $\alpha$. Following~(\ref{eq:LemaDerivative7}), if $\delta_{(I_1,...,2r_n,...,I_N)}<0.4715, n=1,...N$, $\alpha\geq10\mathop {\max}\limits_{i}\|X^0_{(i)}\|_2$ guarantees exact recovery. If $\delta_{(I_1,...,2r_n,...,I_N)}<0.1273, n=1,...N$, $\alpha\geq10\mathop {\max}\limits_{i}\|X^0_{(i)}\|_2$ guarantees exact recovery. In
general, smaller $\delta_{(I_1,...,2r_n,...,I_N)}, n=1,...N$ allows a smaller $\alpha$.
\end{myRem}

\subsection{Spherical section property}
Next, we we derive exact conditions based on the spherical section property (SSP)~\cite{lai2012augmented,zhang2008theory}. There is not much discussion on spherical section property (SSP) for low-rank tensor recovery in the literature, here we present a SSP-based result.

\begin{thm}
\label{thm:SSPconditionFor1}
(SSP condition for exact recovery by~(\ref{eq:AugModel})). Let $\mathfrak{F}: \mathbb{R}^{I_1\times I_2\times \cdots \times I_N}\rightarrow \mathbb{R}^{m}$ be a linear operator. Suppose there exists $\triangle >0$ such that all nonzero nonzero $\mathcal{H}\in \text{Null}(\mathfrak{F})$ satisfy
\begin{equation}\label{eq:SSPcondition1}
    \frac{\|\mathcal{H}\|_*}{\|\mathcal{H}\|_F}\geq \sqrt{\frac{m}{\triangle}}.
\end{equation}
Assume that $\|X^0_{(i)}\|_2,(i=1,...,N)$ and $\alpha>0$ are fixed. If
 \begin{equation}\label{eq:SSPcondition2}
    m\geq(2+\frac{\|X^0_{(i)}\|_2}{\alpha})^2r_i\triangle, (i=1,...,N),
\end{equation}
then the null-space condition holds for all nonzero $\mathcal{H}\in \text{Null}(\mathfrak{F})$. Hence is sufficient for problem to recover any $\mathcal{X}^0$ with rank $(r_1,r_2,...,r_N)$ or less from measurements $b=\mathfrak{F}(\mathcal{X}^0)$.
\end{thm}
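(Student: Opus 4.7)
The goal is to derive the augmented NSP inequality~(\ref{eq:NSPconditionAug}) from the SSP hypothesis and the bound on $m$; once we have that, Theorem~\ref{thm:NSPconditionAug} delivers the recovery statement for free. So the plan is really just to reduce the SSP theorem to the NSP theorem.

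First I would fix an arbitrary nonzero $\mathcal{H} \in \text{Null}(\mathfrak{F})$ and expand the target inequality. Adding $\sum_{i=1}^N \sum_{j=1}^{r_i} \sigma_j(H_{(i)})$ to both sides of~(\ref{eq:NSPconditionAug}) rewrites the NSP condition in the equivalent, and more convenient, form
\begin{equation*}
\sum_{i=1}^N \|H_{(i)}\|_* \;\geq\; \sum_{i=1}^N \left(2+\frac{\|X^0_{(i)}\|_2}{\alpha}\right)\sum_{j=1}^{r_i}\sigma_j(H_{(i)}).
\end{equation*}
The left side is exactly $N\|\mathcal{H}\|_*$ by the definition~(\ref{eq:TensorTraceNormDefinition}), so the SSP hypothesis yields
\begin{equation*}
\sum_{i=1}^N \|H_{(i)}\|_* \;=\; N\|\mathcal{H}\|_* \;\geq\; N\sqrt{\tfrac{m}{\triangle}}\,\|\mathcal{H}\|_F.
\end{equation*}

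Next I would upper-bound the right side by $\|\mathcal{H}\|_F$ mode by mode. Since $\|H_{(i)}\|_F = \|\mathcal{H}\|_F$ for every $i$, Cauchy--Schwarz on the top-$r_i$ singular values gives
\begin{equation*}
\sum_{j=1}^{r_i}\sigma_j(H_{(i)}) \;\leq\; \sqrt{r_i}\sqrt{\sum_{j=1}^{r_i}\sigma_j(H_{(i)})^2} \;\leq\; \sqrt{r_i}\,\|H_{(i)}\|_F \;=\; \sqrt{r_i}\,\|\mathcal{H}\|_F.
\end{equation*}
The hypothesis $m \geq (2+\|X^0_{(i)}\|_2/\alpha)^2 r_i\triangle$ is precisely the statement $\sqrt{m/\triangle}\geq (2+\|X^0_{(i)}\|_2/\alpha)\sqrt{r_i}$ for each $i$; summing over $i=1,\dots,N$ yields
\begin{equation*}
\sum_{i=1}^N\left(2+\frac{\|X^0_{(i)}\|_2}{\alpha}\right)\sqrt{r_i} \;\leq\; N\sqrt{\tfrac{m}{\triangle}}.
\end{equation*}
Chaining the Cauchy--Schwarz bound with this estimate controls the right side of the rewritten NSP by $N\sqrt{m/\triangle}\,\|\mathcal{H}\|_F$, which is $\leq$ the lower bound on $\sum_i\|H_{(i)}\|_*$ obtained from SSP. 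This establishes~(\ref{eq:NSPconditionAug}), and invoking Theorem~\ref{thm:NSPconditionAug} concludes the argument.

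There is no real obstacle here beyond careful bookkeeping; the only place one could slip is the factor of $N$ coming from the normalization $\|\mathcal{X}\|_* = \frac{1}{N}\sum_i\|X_{(i)}\|_*$, and the fact that the hypothesis is imposed per mode (not in aggregate) is exactly what makes $N\sqrt{m/\triangle}$ on the SSP side match $\sum_i (2+\|X^0_{(i)}\|_2/\alpha)\sqrt{r_i}$ on the Cauchy--Schwarz side. Everything else is direct algebraic manipulation.
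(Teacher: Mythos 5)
Your proposal is correct and follows essentially the same route as the paper's proof: rewrite~(\ref{eq:NSPconditionAug}) in the form with $2+\|X^0_{(i)}\|_2/\alpha$, bound the top-$r_i$ singular values by $\sqrt{r_i}\,\|H_{(i)}\|_F=\sqrt{r_i}\,\|\mathcal{H}\|_F$ via Cauchy--Schwarz, and use the SSP together with the per-mode bound on $m$ to close the chain (the paper phrases the last step through $\max_i$ rather than summing the per-mode inequalities, but this is the same averaging argument). No gaps.
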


\begin{proof}
 Condition~(\ref{eq:NSPconditionAug}) is equivalent to
 \begin{equation}\label{eq:NSPconditionAug2}
\sum\limits_{i=1}^N\sum\limits_{j=1}^{I_i}\sigma_j(H_{(i)})\geq\sum\limits_{i=1}^N(2+\frac{\|X^0_{(i)}\|_2}{\alpha})\sum\limits_{j=1}^{r_i}\sigma_j(H_{(i)}).
\end{equation}
Since $\sum\limits_{j=1}^{r_i}\sigma_j(H_{(i)})\leq\sqrt{r_i}\sqrt{\sum\limits_{j=1}^{r_i}\sigma_j(H_{(i)})^2}\leq\sqrt{r_i}\|H_{(i)}\|_F$,~(\ref{eq:NSPconditionAug2}) holds provide that
\begin{equation}\label{eq:NSPconditionAug3}
\sum\limits_{i=1}^N\sum\limits_{j=1}^{I_i}\sigma_j(H_{(i)})\geq\sum\limits_{i=1}^N(2+\frac{\|X^0_{(i)}\|_2}{\alpha})\sqrt{r_i}\|H_{(i)}\|_F.
\end{equation}
Now from~(\ref{eq:SSPcondition1}) and~(\ref{eq:SSPcondition2}), one has
\begin{equation}\label{eq:SSPcondition3}
    \|\mathcal{H}\|_*\geq\mathop {\max}\limits_{i}(2+\frac{\|X^0_{(i)}\|_2}{\alpha})\sqrt{r_i}\|\mathcal{H}\|_F,
\end{equation}
which is equivalent to
\begin{equation}\label{eq:SSPcondition4}
    \frac{1}{N}\sum\limits_{i=1}^N\|H^0_{(i)}\|_*\geq\mathop {\max}\limits_{i}(2+\frac{\|X^0_{(i)}\|_2}{\alpha})\sqrt{r_i}\|\mathcal{H}\|_F \geq\frac{1}{N}\sum\limits_{i=1}^N(2+\frac{\|X^0_{(i)}\|_2}{\alpha})\sqrt{r_i}\|H_{(i)}\|_F.
\end{equation}
Thus the null-space condition holds.
\end{proof}

\section{Conclusion}

In this work we focussed on the recovery guarantees of tensor recovery via convex
optimization. We presented general results stating that the extension of
some sufficient conditions for the recovery of low-rank matrices
using nuclear norm minimization are also sufficient for the recovery of low-rank
tensors using tensor nuclear norm minimization. We extended the null-space property, the restricted isometry principle, and the spherical section property conditions to the augmented tensor recovery problems, and find that any $\alpha\geq10\mathop {\max}\limits_{i}\|X^0_{(i)}\|_2$ guarantees that problem~(\ref{eq:AugModel}) either recovers $\mathcal{X}^0$ exactly or returns an approximate of it nearly as good as the solution of problem~(\ref{eq:ConvexModel}).

There are some directions that the current study can be extended. In this paper, we have focused
on the recovery guarantees of the exact case; it would be meaningful to also analyze the guarantees for
the stable recovery. Second, generalize the linearized Bregman algorithm~\cite{yin2008bregman} for augmented sparse vector and low-rank matrix recovery~\cite{lai2012augmented} to low-rank tensor case. Moreover, from our results, there is a big ``gap'' between the recovery conditions for matrices and tensors, and we need to fill this ``gap'' in future work. In
a broader context, we believe that the current paper could serve as a basis for examining the
augmented trace norm models in tensor recovery.

\section{Acknowledgments}
This work was partly supported by the National Natural Science Foundation of China under grant No. 91120303 and No. 61071181.

\bibliographystyle{IEEEtran}
\bibliography{TITstrings}

\end{document}